\newtheorem{theorem}{Theorem}
\newtheorem{lemma}{Lemma}
\newtheorem{conjecture}{Conjecture}
\newtheorem{corollary}{Corollary}
\theoremstyle{definition}
\theoremstyle{remark}
\newtheorem{definition}{Definition}
\providecommand{\keywords}[1]
{
  \small	
  \textbf{Keywords:} #1
}
\newclass{\SHARPP}{\#P}
\newclass{\PPOLY}{P/Poly}
\newlang{\OV}{OV}
\newlang{\HAMPATH}{HAMPATH}
\newlang{\HAMCYCLE}{HAMCYCLE}
\newlang{\CLIQUE}{CLIQUE}
\newlang{\MULT}{MULT}
\newlang{\DLP}{DLP}
\newfunc{\LDE}{LDE}
\newfunc{\PER}{PER}
\begin{document}

\title {New Techniques for Constructing Rare-Case Hard Functions}                      

\author{Tejas Nareddy\orcidlink{0009-0007-7032-6654}\footnote{Department of Computer Science and Information Systems, Birla Institute of Technology and Science, Pilani, Pilani-333031, Rajasthan, I\textsc{ndia}. Email: \texttt{f20211462@pilani.bits-pilani.ac.in}.} \and 
Abhishek Mishra\orcidlink{0000-0002-2205-0514}\footnote{Department of Computer Science and Information Systems, Birla Institute of Technology and Science, Pilani, Pilani-333031, Rajasthan, I\textsc{ndia}. Email: \texttt{abhishek.mishra@pilani.bits-pilani.ac.in}.}}

\maketitle
\thispagestyle{empty}

\begin{abstract}
We say that a function is rare-case hard against a given class of algorithms (the adversary) if all algorithms in the class can compute the function only on an $o(1)$-fraction of instances of size $n$ for large enough $n$. Starting from any $\NP$-complete language, for each $\alpha > 0$, we construct a function that cannot be computed correctly even on a $1/n^\alpha$-fraction of instances for polynomial-sized circuit families if $\NP \not \subset \PPOLY$ and by polynomial-time algorithms if $\NP \not \subset \BPP$ - functions that are rare-case hard against polynomial-sized circuits and polynomial-time randomized algorithms. The constructed function is a number-theoretic polynomial evaluated over specific finite fields. For $\NP$-complete languages that admit parsimonious reductions from all of $\NP$ (for example, $\SAT$), the constructed functions are hard to compute even on a $1/n^\alpha$-fraction of instances by polynomial-time randomized algorithms and polynomial-sized circuit families simply if $\P^{\SHARPP} \not \subset \BPP$ and $\P^{\SHARPP} \not \subset \PPOLY$, respectively. We also show that if the \textit{Randomized Exponential Time Hypothesis (RETH)} is true, none of these constructed functions can be computed even on a $1/n^\alpha$-fraction of instances in subexponential time. These functions are very hard, almost always.

While one may not be able to efficiently compute the values of these constructed functions themselves, in polynomial time, one can verify that the evaluation of a function, $s = f(x)$, is correct simply by asking a prover to compute $f(y)$ on targeted queries.

We have extended our work to give an alternative proof of a variant of Lipton's theorem \citep{Lipton1989}. We also compare our techniques for constructing rare-case hard functions with two other existing methods in the literature \citep{Sudan2001, Feige1996}. Our novel contributions can be summarized as follows:

\begin{enumerate}

\item A careful construction of the generalized certificate counting polynomials that admit a downward self-reduction and a sumcheck protocol. This construction may also find applications in other areas of computational complexity.

\item Application of the Chinese remainder theorem to reduce the field size.

\item Application of a theorem on an upper bound on the gap between consecutive primes \citep{Baker2001} to prove rare-case hardness of our number-theoretic polynomial.

\end{enumerate}

\end{abstract}

\keywords{Fine-Grained Complexity; Rare-Case Hardness; Worst-Case to Rare-Case Reductions, Number-Theoretic Polynomials.}

\newpage

\pagenumbering{roman}
\setcounter{page}{1}

\maketitle

\tableofcontents

\newpage

\pagenumbering{arabic}
\setcounter{page}{1}

\section{Introduction}
\label{section:1}

For decades, complexity theory has focused chiefly on worst-case hardness, from the original proofs of \cite{Cook1971} and \cite{Levin1973} that the \textit{satisfiability} language ($\SAT$) is $\NP$-complete to \cite{Karp1972} showing the following year that many natural languages are $\NP$-complete as well. These languages are not solvable by deterministic polynomial-time algorithms if $\P \neq \NP$. However, for many applications, cryptography being the foremost, we want better guarantees of hardness than just worst-case hardness. It is not enough for our cryptographic protocols, that for every algorithm, there is some instance that is hard. This motivates the need for ``rare-case'' hardness. Suppose we can guarantee that for some problem, for any reasonably fast algorithm, the algorithm only outputs the correct answer on an $o(1)$-fraction of instances. In that case, we can be assured that, for large enough $n$, any instance we randomly generate will probably not be solvable by a reasonably fast adversary.

The phrase ``rare-case hardness'' is inspired by its usage by \cite{Goldreich2018} on counting $t$-cliques, where they show that counting cliques of a specific size in a graph is hard for even a $1/\polylog(n)$-fraction of instances if it is in the worst case. Similar work has been done to show that some variants of $t$-clique are as hard in the average-case as they are in the worst case \citep{Mina2020, Enric2019}. Similar results have been shown by \cite{Kane2019} for the orthogonal vectors ($\OV$) problem against $\AC^0$ formulas under certain worst-case hardness assumptions. They have shown the existence of a distributional $\OV$ problem that can be solved by $o(n^2)$-sized $\AC^0$ circuits for a $1 - o(1)$-fraction of instances.

As a motivational example, consider the problem of multiplying two $n$-bit numbers ($\MULT_n$). \cite{Harvey2021} have proved that $\MULT_n$ can be solved in $O (n \log n)$-time on a \textit{multitape Turing machine (MTM)}. We can say that $\MULT_n$ is easy for the set of $O(n \log n)$-time MTMs, since there exists at least one MTM that solves $\MULT_n$ correctly over all instances with parameter $n$. It is an open problem whether there exists an $O(n)$-time MTM which correctly solves $\MULT_n$ on all instances with parameter $n$ \citep{Afshani19}. Now we ask the question: what is the largest fraction of instances an $O(n)$-time MTM can solve $\MULT_n$? If the answer to this question is $1$, then we say that $\MULT_n$ is easy for the set of $O(n)$-time MTMs. If the answer is a constant, we say that $\MULT_n$ is average-case hard  \citep{Ball2017} for the set of $O(n)$-time MTMs. Finally, if the answer is a negligible fraction that tends to $0$ as $n$ tends to infinity, we say that $\MULT_n$ is rare-case hard (formally defined in Section \ref{section:3}) for the set of $O(n)$-time MTMs.

Another famous and instructive example of rare-case hardness is the usage of the \textit{Discrete Logarithm Problem} ($\DLP$) in the pseudorandom generator of \cite{Blum1982}, depending on the worst-case hardness of the $\DLP$. The $\DLP$ asks whether given a prime $p$ of $n$-bits, a multiplicative generator $g$ of $\mathbb{Z}^*_p$, and $l \in \mathbb{Z}^*_p$, to find $r$ such that $g^r \equiv l \mod p$. Suppose for any $\alpha > 0$, we have a polynomial-time algorithm (an oracle $O$) solving the $\DLP$ on a $1/n^\alpha$-fraction of instances for $n$-bit primes. We have a simple worst-case to rare-case reduction (formally defined in Section \ref{section:3}) - given $l$, simply generate $r^\prime$ at random and ask $O$ for the answer to the $\DLP$ for $l \cdot g^{r^\prime}$. If $O$ returns $r$, check if $g^{r-r^\prime} \equiv l \mod p$, and return $r - r^\prime$ if so. Otherwise, we will repeat this process. We are expected to find the answer in $n^\alpha$ queries, giving us a probabilistic algorithm. Due to this, we have that if the $\DLP$ is not solvable by randomized polynomial-time algorithms, then no randomized or deterministic algorithm solves the $\DLP$ on a $1/n^\alpha$-fraction of instances for any $\alpha > 0$, giving us a one-way function.

However, we would also like to construct families of hard problems that are hard due to many weak conjectures and hypotheses, which, when scaled down to asymptotically small input sizes, can also give us protocols such as proof of work \citep{Dwork1993}, that are hard to solve almost all the time, but always very quick to verify\footnote{Say, under some conjecture, taking $\Omega(n^2)$-time for a prover to solve, but $\polylog(n)$-time to verify.}.

In this paper, we show that we can construct infinite families of such rare-case hard functions using $\NP$-complete languages as our starting point. The constructed functions are number-theoretic polynomials evaluated over $\mathbb{Z}_p$ for certain primes $p$. These families also have polynomial-time interactive proof systems where the prover and verifier only need to communicate inputs and outputs to the constructed function for a verifier to be convinced. In fact, the interactive proof system is used within the reduction. Interestingly, we can look at any reduction as an interactive proof with varying degrees of trust. Many-one polynomial-time reductions for $\NP$-completeness fully trust the prover and take the prover's word as gospel. Here, since our hypothetical algorithm is correct only sometimes, we do not trust it fully but employ some tools to help extract more truth from an oracle that tells the truth only sometimes. A notable work that uses a verification protocol as a reduction is by \cite{Shamir1992} that proves $\IP = \PSPACE$. We use a modified version of the sumcheck protocol as proposed by \cite{Lund1992}.

We use a theorem of \cite{Sudan2001} that \cite{Goldreich2018} use to error-correct to go from an algorithm that is correct on a small fraction of instances to a randomized algorithm that is correct with very high probability on any instance. As with this paper, and most other works on average-case hardness \citep{Ball2017}, we leverage the algebraic properties of low-degree polynomials over large fields to show that if such a polynomial is ``sufficiently expressive,'' in that it can solve a problem we believe to be hard in the worst-case with a small number of evaluations of the polynomial, we can error-correct upwards from the low-correctness regime to solve our problem that is conjectured to be hard with a very high probability.

The remainder of the paper is organized as follows. Section \ref{section:2} gives the preliminaries. Section \ref{section:3} gives an overview of our results. Section \ref{section:4} describes the generalized certificate counting polynomials. Section \ref{section:5} gives an oracle sumcheck protocol over $\mathbb{Z}_p$. Section \ref{section:6} gives a method to reconstruct the certificate counting polynomials over $\mathbb{Z}$. Section \ref{section:7} proves the main results of the paper. Section \ref{section:8} compares our techniques to previously known techniques of constructing rare-case hard functions. Finally, we give further directions for research in Section \ref{section:9}. The appendix gives an alternative proof of a variant of Lipton's theorem \citep{Lipton1989}.

\section{Preliminaries}
\label{section:2}

In this section, we briefly explore the ideas that are used in the proofs and reductions. Some subsections will elaborate slightly more than necessary to impart ``intuitive pictures'' or ideas to keep in mind that will help one better digest the proofs and the larger ideas that motivate the proofs. The lemmas and theorems are specialized to our requirements and are presented as lemmas.

\subsection{Notations}
\label{section:2.1}

$\mathbb{N}$ denotes the set of natural numbers, $\{\, 1, 2, 3, 4, \ldots \,\}$. For all $n \in \mathbb{N}$, $[n]$ denotes the set of first $n$ natural numbers, $\{\, 1, 2, \ldots, n - 1, n \,\}$. $\mathbb{Z}$ denotes the ring of integers with the usual addition and multiplication operations. The variable $p$ denotes a prime number. $Z_p$ is the finite field with the usual operations of addition and multiplication modulo $p$. $\mathbb{Z}^*_p$ is the finite multiplicative group with the group operation as multiplication modulo $p$. $\mathbb{F}$ denotes a finite field. The notation $(a, b)_p$ denotes the set of primes in the interval $(a, b)$. The notation $\pi(a, b)$ denotes the number of primes in the interval $(a, b)$.

$O$ denotes an oracle for computing some function. $M^O$ denotes that the machine $M$ has oracle access to $O$. The function $\poly(n)$ denotes any polynomial in $n$. The function $\polylog(n)$ denotes any polynomial in $\log n$. The function $\ln x$ is the natural logarithm of $x$ to the base $e$. $\mathcal{P}[E]$ denotes the probability of the event $E$. $\mathcal{E}[X]$ denotes the expectation of the random variable $X$. The notation $f(x) \sim g(x)$ means that $\lim_{x \to \infty} f(x) / g(x) = 1.$

The notation $(x_i)_{i = 1}^n$ denotes the ordered $n$-tuple, $(x_1, x_2, \ldots, x_{n - 1}, x_n)$. For simplifying the notations, we use the comma operator, ``,'', between two $n$-tuples to mean the ``Cartesian product'' of the two $n$-tuples (the ``$n$'' can be different for the two operands). For example, 
\begin{equation*}
\begin{split}
\left( (x_i)_{i = 1}^3, (x_i)_{i = 5}^9 \right) & = \left( (x_i)_{i = 1}^3 \times (x_i)_{i = 5}^9 \right) \\
& = (x_1, x_2, x_3, x_5, x_6, x_7, x_8, x_9).
\end{split}
\end{equation*}

For a language $L \subset \{\, 0, 1 \,\}^*$, we define its characteristic function $L : \{\, 0, 1 \,\}^* \to \{\, 0, 1 \,\}$ as
\begin{equation*}
L(y) = 1 \iff y \in L,
\end{equation*}
and
\begin{equation*}
L(y) = 0 \iff y \notin L.
\end{equation*}

\subsection{The Schwartz-Zippel Lemma}
\label{section:2.2}

One of the key factors even allowing the existence of many modern error-correcting codes \citep{Reed1960, Gemmell1992} is the fact that polynomials whose degree is much smaller than the size of the field it is evaluated on are very rarely $0$. More concretely, analogous to the fundamental theorem of algebra over the complex plane, the Schwartz-Zippel lemma for finite fields says that any multivariate polynomial $f : \mathbb{F}^n \to \mathbb{F}$ of degree $d$ can take the value $0$ on at most a $d / |\mathbb{F}|$-fraction of inputs. That is,
\begin{lemma} 
\label{lemma:1}
\textbf{The Schwartz-Zippel Lemma \citep{Schwartz1980, Zippel1979}.} \\
If $x$ is randomly chosen from $\mathbb{F}^n$, then  
\begin{equation*}
\mathcal{P}_{x \leftarrow_r \mathbb{F}^n} [f(x) = 0] \leq \frac{d}{|\mathbb{F}|}.
\end{equation*}
Even more generally, for $S \subseteq \mathbb{F}$,
\begin{equation*}
\mathcal{P}_{x \leftarrow_r S^n} [f(x) = 0] \leq \frac{d}{|S|}.
\end{equation*}
\end{lemma}
Due to this lemma, we can also see that low-degree polynomials cannot take any one value in $\mathbb{F}$ too often and that two low-degree polynomials cannot agree too often. This enables the existence of error-correcting codes and list-decoders \citep{Sudan2001}.

\subsection{The List Decoding of Polynomials Problem}
\label{section:2.3}

We say that a function $g$ $\epsilon$-agrees ($0 \leq \epsilon \leq 1$) with a function $f$, if the two functions return the same values on $\epsilon$-fraction of the inputs. Let $l(\epsilon, d)$ be the number of the polynomials $g$ with a total degree at most $d$ and having $\epsilon$-agreement with $f$. In the list decoding of polynomials problem, we are given an oracle $O$ for computing a function $f: \mathbb{F}^n \to \mathbb{F}$. We are also given the parameters $\epsilon \in [0, 1]$ and $d \in \mathbb{N}$. Our objective is to construct randomized oracle machines $\left( M_i^O \right)_{i = 1}^{l(\epsilon, d)}$ such that for every polynomial $g$ of total degree at most $d$ and having $\epsilon$-agreement with $f$, there exists a randomized oracle machine $M_i^O$ ($i \in [l(\epsilon, d)]$) computing $g$ with a probability of error upper bounded by $1/2^{q(n)}$, where $q(n)$ is a polynomial. The list-decoder we will be using throughout this work is due to the following theorem:

\begin{lemma} 
\label{lemma:2}
\textbf{The Sudan-Trevisan-Vadhan (STV) List-Decoder \citep{Sudan2001}.} \\
Given any oracle $O$ that computes a polynomial $p : \mathbb{F}^n \to \mathbb{F}$ of degree $d$ correctly on over an $\epsilon > \sqrt{2d / |\mathbb{F}|}$ fraction of instances, in $\poly(n, d, 1 / \epsilon, \log |\mathbb{F}|)$-time, we can produce $O(1 / \epsilon)$ randomized oracle machines (with oracle access to $O$), all of which compute some multivariate polynomial from $\mathbb{F}^n$ to $\mathbb{F}$ of degree $d$, one of which computes $f$. Moreover, each machine runs in $\poly(n, d, 1 / \epsilon, \log |\mathbb{F}|)$-time and disagrees with the polynomial it intends to compute with a probability of at most $1 / 2^{q(n)}$ for some polynomial $q$.
\end{lemma}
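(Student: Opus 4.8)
The plan is to obtain Lemma~\ref{lemma:2} by combining a combinatorial bound with an algorithmic one. The combinatorial step bounds the length of the output list: I would show that at most $O(1/\epsilon)$ distinct degree-$d$ polynomials can $\epsilon$-agree with the function computed by $O$ once $\epsilon$ exceeds (roughly) $\sqrt{d/|\mathbb{F}|}$. This follows from Lemma~\ref{lemma:1} --- two distinct degree-$d$ polynomials agree on at most a $d/|\mathbb{F}|$-fraction of $\mathbb{F}^n$ --- together with a Johnson-type second-moment argument applied to the indicator vectors of the candidate polynomials. The same calculation explains why the hypothesis is stated as $\epsilon > \sqrt{2d/|\mathbb{F}|}$: the extra constant factor is slack that the algorithmic reduction below will spend when it passes from points of $\mathbb{F}^n$ to points on a low-degree curve.

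The algorithmic engine I would use is univariate Reed--Solomon list decoding (the Guruswami--Sudan procedure): given $|\mathbb{F}|$ point--value pairs of which at least an $\epsilon$-fraction are consistent with some unknown univariate polynomial of degree $D$, and $\epsilon > \sqrt{2D/|\mathbb{F}|}$, one can produce in $\poly(D, |\mathbb{F}|)$ time the complete (length $O(1/\epsilon)$) list of all such polynomials. The heart of the construction is then a random-curve reduction from the $n$-variate problem to this univariate one. To define the oracle machine associated with a particular close polynomial $g$, fix once and for all a constant number of random ``anchor'' points $z_1,\dots,z_k \in \mathbb{F}^n$; the advice that individuates this machine is the tuple of true values $(g(z_1),\dots,g(z_k))$. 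On input $x$, the machine chooses a random curve $C:\mathbb{F}\to\mathbb{F}^n$ of bounded degree passing through $x$ and through $z_1,\dots,z_k$, queries $O$ at every point of $C$, runs the univariate decoder on the resulting word to obtain $O(1/\epsilon)$ candidate polynomials of degree $\deg(C)\cdot d$ --- one of which is the genuine restriction $g\circ C$ --- and keeps the candidate whose values at the parameters of $z_1,\dots,z_k$ match the advice. Uniqueness of this candidate holds with high probability because, by Lemma~\ref{lemma:1}, two distinct univariate polynomials of degree $\deg(C)\cdot d$ agree on only a $\deg(C)\,d/|\mathbb{F}|$-fraction of $\mathbb{F}$ while the anchor parameters are essentially generic; the machine then outputs this candidate evaluated at the parameter of $x$, which (on the high-probability event that decoding and selection both succeed) equals $g(x)$. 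To assemble the whole list one enumerates the plausible advice tuples --- there are only $O(1/\epsilon)$ essentially distinct ones, since there are only that many close polynomials and, with high probability over the anchors, their value-vectors are pairwise distinct --- and independent repetition followed by a majority vote amplifies each machine's success probability until its error is below $1/2^{q(n)}$.

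I expect the main obstacle to be the probabilistic analysis of the random curve, where three requirements pull against one another. One must guarantee, first, that the fraction of points of $C$ on which $O$ agrees with $g\circ C$ is concentrated closely enough around $\epsilon$ that the univariate decoder is provably inside its decoding radius (this is where the pairwise --- or higher-order --- near-independence of the points of $C$, and hence the choice of $\deg(C)$, is used); second, that the anchor constraints leave exactly one surviving candidate except with small probability; and third, that after amplification all of these low-probability bad events together leave total error below $2^{-q(n)}$. At the same time the inflated univariate degree $\deg(C)\cdot d$ must still satisfy $\epsilon > \sqrt{2\deg(C)\,d/|\mathbb{F}|}$, which forces $\deg(C)$ to be as small as possible and thus leaves only a very thin budget of anchor points through which the curve may pass --- yet one needs enough anchors to disambiguate. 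Reconciling ``few enough anchors to stay within the decoding radius'' with ``enough anchors to pin down the correct candidate'' is, to my mind, the delicate core of the argument; by comparison, the running-time bookkeeping ($\poly(n,d,1/\epsilon,\log|\mathbb{F}|)$) and the error amplification are routine.
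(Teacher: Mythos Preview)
The paper does not prove Lemma~\ref{lemma:2} at all: it is stated as a citation to \cite{Sudan2001} and immediately followed by the sentence ``We will, however, use this result as a black box in all our proofs.'' The only proof-adjacent content the paper offers is a one-line intuition (``taking random lines in $\mathbb{F}^n$ parameterized by one variable in the univariate case to reconstruct these polynomials''). So there is nothing to compare your proposal against on the paper's side.

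That said, your sketch is a faithful outline of the actual Sudan--Trevisan--Vadhan argument, and in fact more accurate than the paper's one-liner: in the low-agreement regime STV use low-degree random \emph{curves} (not lines) precisely so that the curve can pass through both the query point $x$ and a small set of anchor points whose true values serve as the per-machine advice, exactly as you describe; the univariate engine is Sudan/Guruswami--Sudan list decoding, and the tension you identify --- keeping $\deg(C)$ small enough that $\epsilon>\sqrt{2\deg(C)\,d/|\mathbb{F}|}$ while still threading enough anchors to disambiguate --- is indeed the crux. For the purposes of this paper, though, none of this is required: you may simply cite \cite{Sudan2001} and move on.
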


The list-decoder works by trying to compute all polynomials with an $\Omega(\epsilon)$-fraction agreement with $O$ and then taking random lines in $\mathbb{F}^n$ parameterized by one variable in the univariate case to reconstruct these polynomials. We will, however, use this result as a black box in all our proofs. We will call this the ``STV list-decoder'' going forward.

As a prelude to future sections, we aim to error-correct from $1 / n^\alpha$-correctness. Notice that when $1 / \epsilon$, $d$ and $|\mathbb{F}|$ are polynomials in $n$, the entire procedure runs in $\poly(n)$-time. Once we have $O(1 / \epsilon) = \poly(n)$ machines, we employ various techniques to ``identify'' which machine computes the polynomial $f$ that interests us.

\subsection{The Chinese Remainder Theorem}
\label{section:2.4}
An age-old theorem we will use from elementary number theory is the Chinese remainder theorem \citep{Niven1991}. It gives a polynomial-time algorithm for solving a given set of linear congruences.
\begin{lemma} 
\label{lemma:3}
\textbf{The Chinese Remainder Theorem.} \\
For a given set of distinct primes $(p_i)_{i = 1}^n$ and a set of integers $(a_i)_{i = 1}^n$, such that $a_i \in Z_{p_i}$ for all $i$, the system of linear congruences $x \equiv a_i \mod p_i$ has a unique solution modulo $\prod_{i = 1}^n p_i$, that can be computed in polynomial-time in the input size.
\end{lemma}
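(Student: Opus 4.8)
The plan is to prove existence by an explicit construction, uniqueness by a divisibility argument, and the polynomial-time bound by bookkeeping on the extended Euclidean algorithm.

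For existence, I would set $N = \prod_{i=1}^n p_i$ and, for each $i \in [n]$, put $N_i = N / p_i = \prod_{j \neq i} p_j$. Because the $p_j$ are distinct primes, $\gcd(N_i, p_i) = 1$, so the extended Euclidean algorithm produces $M_i \in Z_{p_i}$ with $N_i M_i \equiv 1 \pmod{p_i}$. I would then check that $x_0 = \left( \sum_{i=1}^n a_i N_i M_i \right) \bmod N$ solves every congruence: modulo $p_k$, each term with $i \neq k$ vanishes since $p_k \mid N_i$, and the surviving term is $a_k N_k M_k \equiv a_k \pmod{p_k}$.

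For uniqueness modulo $N$: if $x$ and $x'$ both solve the system, then $p_i \mid (x - x')$ for every $i$; since the $p_i$ are pairwise distinct primes, their product $N$ divides $x - x'$, whence $x \equiv x' \pmod{N}$. This is the only point that genuinely uses the hypothesis on the $p_i$ — the general statement needs pairwise coprimality, which distinct primes certainly satisfy. For the complexity claim, I would note that the input size is $\Theta\!\left(\sum_i (\log p_i + \log |a_i|)\right)$, which is $\Omega(\log N)$, so $N$, each $N_i$, each $M_i$, and the final sum are integers of bit-length $O(\log N)$, i.e.\ polynomial in the input size. Forming $N$ and the $N_i$ uses $O(n)$ multiplications of such integers, each extended-Euclidean call costs $\poly(\log N)$, and the final summation is $n$ multiply-and-add steps on $O(\log N)$-bit integers, so the total is $\poly(\text{input size})$.

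There is essentially no deep obstacle here — this is a classical result. The only thing demanding a little care is the polynomial-time assertion: one must confirm that all intermediate quantities, particularly $N$ and the inverses $M_i$, stay polynomially bounded in the input length rather than, say, exponentially bounded in $n$; this holds because $\log N \le \sum_i \log p_i$ is at most the input size. If one wished never to form the full product $N$ before the end, one could instead fold the congruences in pairwise — combining $x \equiv a_1 \pmod{p_1}$ and $x \equiv a_2 \pmod{p_2}$ into a single congruence modulo $p_1 p_2$, then incorporating $p_3$, and so on — but the direct construction already runs in polynomial time, so this refinement is unnecessary.
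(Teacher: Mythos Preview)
Your proof is correct and entirely standard. The paper does not actually prove this lemma: it is stated as a classical preliminary (with a citation to a number-theory textbook) and used as a black box, so there is no paper-side argument to compare against; your explicit construction via $N_i M_i$, divisibility argument for uniqueness, and bit-length bookkeeping for the polynomial-time claim are exactly the textbook proof and would be acceptable if the paper had chosen to include one.
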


Specifically, we compute the number of accepting certificates modulo $p$ for many primes $p$ and find the number of certificates by ``Chinese remaindering''. As long as the product of the primes is larger than the largest number of accepting certificates, we are guaranteed to get our solution.

\subsection{The Distribution of Primes}
\label{section:2.5}
One thing that we want to be sure of is that we have enough primes that are ``of similar size''. This is important for us because if our oracle $O$ is correct on a large fraction of instances for some very large prime, we may satisfy the case where $O$ is correct on the required number of instances over all primes just by being sufficiently correct over the field of one very large prime. To avoid this, we would like to ensure that there are many primes of roughly similar size, ensuring that sufficiently many primes have sufficient correctness. This will be proved in later sections. In this section, we will present the lemmas, theorems, and ideas.

A landmark theorem describing the distribution of the primes is the prime number theorem, proven independently by \cite{Hadamard1896} and \cite{Poussin1896}. It states that if $\pi(x)$ is the number of primes less than $x$, then
\begin{equation*}
  \pi(x) \sim \frac{x}{\ln x}.
\end{equation*}
This theorem alone is not good enough for us. The conjecture of \cite{Cramer1936} states that $p_{n+1} - p_n = O \left( (\log p_n)^2 \right)$\footnote{$p_n$ refers to the $n$'th prime number.}  and this would suffice for us. However, this problem is open and is stronger than the upper bounds implied by the Riemann hypothesis \citep{Riemann1859}. The following theorem is the strongest known unconditional upper bound on gaps between consecutive primes.
\begin{lemma} 
\label{lemma:4}
\textbf{An Upper Bound on the Gap Between Consecutive Primes \citep{Baker2001}.} \\
An upper bound on the difference between consecutive primes, $p_{n + 1}$ and $p_n$, for all $n \in \mathbb{N}$ is given by
\begin{equation*}
  p_{n+1} - p_n = O \left( p_n^{0.525} \right).
\end{equation*}
\end{lemma}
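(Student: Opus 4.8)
This statement is a celebrated theorem of analytic number theory, and I would not reprove it: the only reasonable course is to invoke \citep{Baker2001} as a black box, which is exactly how it is used later in the paper. For completeness, here is the shape of the argument that underlies it.

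The plan is to prove the equivalent assertion that there is a prime in the interval $(x - x^{\theta}, x]$ for every sufficiently large real $x$, with $\theta = 0.525$; the gap bound then follows by applying this to $x = p_{n+1}$. Setting $\psi(x) = \sum_{p^{k} \le x}\log p$, it suffices to show $\psi(x) - \psi(x - x^{\theta}) \gg x^{\theta}$, i.e.\ that the short interval carries a positive proportion of its expected von Mangoldt weight. First I would reduce the count of primes in $(x - x^{\theta}, x]$ to a lower bound for $\sum_{x - x^{\theta} < n \le x}\Lambda(n)$ and then apply a sieve decomposition --- Harman's sieve, a quantitative refinement of Buchstab iteration --- to split the ``$n$ is prime'' indicator into finitely many pieces of Type I shape $\sum_{d \sim D} a_{d}\sum_{m\,:\,dm \in I} 1$ and Type II shape $\sum_{m \sim M}\sum_{k \sim K} a_{m} b_{k}\,\mathbf{1}[mk \in I]$. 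The Type I sums, for $D$ in a permissible range, reduce to counting integers in a short interval and are handled essentially trivially. The Type II sums are the crux: one expands the interval indicator by truncated Perron / Fourier analysis and is left to bound mean values and large values of Dirichlet polynomials $\sum_{m} a_{m} m^{-it}$ and $\sum_{k} b_{k} k^{-it}$ over a range of $t$ of size roughly $x^{1-\theta}$, for which one invokes the fourth-moment estimate, the large-values theorems, and zero-density estimates for $\zeta(s)$ near $\sigma = 1$.

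The step I expect to be the obstacle --- and the reason the exponent is the odd value $0.525$ rather than $1/2$ --- is precisely breaking the $1/2$ barrier in the Type II analysis: feeding only classical zero-density input yields primes merely in $(x - x^{1/2 + \varepsilon}, x]$, and going below $1/2$ forces one to exploit the arithmetic of the sieve weights, use the asymmetry of Harman's method to recast the ``bad'' Buchstab terms as controllable Type II sums, and splice together several density estimates valid in different sub-ranges of the critical strip. Pinning down $0.525$ is then a delicate combinatorial-analytic optimization over the sieve decomposition and the ranges $M$, $K$, $D$, rather than any single clean inequality --- which is why, for the present paper, the right move is simply to cite \citep{Baker2001} and use the bound as stated.
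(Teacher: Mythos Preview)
Your proposal is correct and matches the paper's treatment: the paper does not prove this lemma at all but states it as a preliminary and explicitly directs the reader to \citep{Baker2001}, noting that ``the result itself uses deep techniques in analytic number theory and sieve theory.'' Your decision to cite it as a black box is exactly what the paper does; the sketch you add of Harman's sieve, Type I/II decomposition, and the zero-density input is accurate extra context but goes beyond what the paper provides.
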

This is good enough for our purposes. In particular, the gap between consecutive primes between $m$ and $2m$ is at most $m^{0.526}$ for sufficiently large $m$, giving us at least $m^{0.474}$ primes in this range. The result itself uses deep techniques in analytic number theory and sieve theory, and the interested reader is directed to the original paper.

\subsection{The Sumcheck Protocol}
\label{section:2.6}
The technique of \cite{Lund1992} to verify answers to polynomial queries set the field of interactive proofs ablaze, famously followed by a proof by \cite{Shamir1992} that $\IP = \PSPACE$. The protocol is described below.

Suppose we have a polynomial $g:\mathbb{F}^n \to \mathbb{F}$ of degree $d$ with $dn < |\mathbb{F}|$. The sumcheck protocol begins with the prover making the following claim:
\begin{equation*}
  s = \sum \limits_{x \in \{\, 0,1 \,\}^n} g(x).
\end{equation*}
Along with this, the prover also sends the verifier the coefficients of the univariate polynomial,
\begin{equation*}
  g^\prime(r) = \sum \limits_{(x_i)_{i = 2}^{n} \in \{\, 0,1 \,\}^{n-1}} g \left( r, (x_i)_{i = 2}^{n} \right).
\end{equation*}
The verifier checks that the degree of $g^\prime(r)$ is at most the degree of $x_1$ in $g$ and that $g\prime(0) + g\prime(1) = s$. If true, the verifier picks a random $r^\prime$ from $\mathbb{F}$ and iterates the process, asking the prover to prove that $g^\prime(r^\prime)$ is indeed the value computed by the verifier. In the last step, in the $n$'th iteration, the verifier has to evaluate the polynomial $g$ on some input in $\mathbb{F}^n$. If this evaluation is as suggested by the execution of the protocol, then the verifier accepts. If, at any stage, the verifier receives a polynomial whose degree is too large or whose evaluation is inconsistent, it rejects.

Note that if the claim is correct, the prover can remain entirely truthful and give all answers truthfully - the verifier accepts with probability $1$. If the claim is wrong, due to the Schwartz-Zippel Lemma (\ref{lemma:1}), the probability that $g^\prime(r^\prime)$ is the same as the correct summation of $g$ in any step is at most $d/|\mathbb{F}|$. By induction, one can show that the probability that the verifier accepts if the initial claim is incorrect is at most $dn/|\mathbb{F}|$. It is key to remember that even if the prover lies cleverly, managing to pass iterations $1$ through $n-1$, with high probability, it will be exposed in the last step, depending on the random value in $\mathbb{F}$ chosen in the last step by the verifier.

Using the list decoding techniques of the STV list-decoder (Lemma \ref{lemma:2}), we can ask questions to an oracle $O$ that knows the answer sometimes and sometimes answers questions different from the ones we ask it. Using the sumcheck protocol, we can find out answers to these questions even when $O$'s answers are very ``noisy''.

\section{An Overview of our Results}
\label{section:3}

Here, our main intention is to show that there is a reduction from $\NP$-complete languages to a particular ``set'' of polynomial evaluations. To be more concrete, suppose we have an oracle $O$ that computes the number of certificates for any instance $x$ of some $\NP$-complete language modulo $p$ for sufficiently many primes $p$. We can use the Chinese remainder theorem (Lemma \ref{lemma:3}) to reconstruct the exact number of certificates certifying $x$. Moreover, one might notice that if the $\NP$-complete language has parsimonious reductions from all of $\NP$ (for example, $\SAT$ \citep{Arora2009, Goldreich2008}), then this ``set'' of polynomial evaluations can, due to the parsimonious polynomial-time reductions, compute the number of accepting certificates for any language in $\NP$.

Before going into further details, the main idea of the reduction is that all reductions are interactive proofs between a main machine and an oracle, especially when the oracle is wrong some portion of the time. Our idea is to query an oracle that is correct on a $1 / n^\alpha$-fraction of instances, and along with the STV list-decoder (Lemma \ref{lemma:2}), we make a ``best effort'' attempt to find the answer to a query we have. In some cases, even with the list decoder, we will not be able to recover answers reliably, which means that any answers we receive must go through some scrutiny. We want to make sure any answers we use in further computations are sound with very high probability. We use the sumcheck protocol (Section \ref{section:2.6}), inspired by \cite{Shamir1992} to prove $\IP = \PSPACE$.

Before proceeding further, we formally define rare-case hardness and worst-case to rare-case reductions.

\begin{definition}
\label{def:1}
\textit{\textbf{Rare-Case Hard Functions.}} \\
\textit{Let $\mathcal{C}$ be a class of algorithms, circuits, decision trees, or any objects (or machines) of some model of computation. We say that a function $f$ is easy against $\mathcal{C}$ if there exists a machine in $\mathcal{C}$ that correctly computes $f$ on all the instances. We say that $f$ is hard against $\mathcal{C}$ if none of the machines in $\mathcal{C}$ can correctly compute $f$ on all the instances. We say that $f$ is $h(n)$-hard against $\mathcal{C}$ if no machine in $\mathcal{C}$ can compute $f$ correctly on an $\Omega(h(n))$-fraction\footnote{$h$ is a function defined from $\mathbb{N}$ to $\mathbb{R}$.} of  instances of length $n$ for all sufficiently large $n$. We say that $f$ is average-case hard against $\mathcal{C}$ if $f$ is $h(n)$-hard against $\mathcal{C}$ and $h(n) = \Theta(1)$. We say that $f$ is rare-case hard against $\mathcal{C}$ if $f$ is $h(n)$-hard against $\mathcal{C}$ and $h(n) = o(1)$.}
\end{definition}
\begin{definition}
\label{def:2}
\textit{\textbf{Worst-Case to Rare-Case Reductions.}} \\
\textit{We say that there is a $(t(n), h(n))$-worst-case to rare-case reduction from a function $f$ to a function $f^\prime$ if there exists an $O(t(n))$-time probabilistic algorithm that, given access to an oracle $O$ that computes $f^\prime$ correctly on an $h(n)$-fraction of instances of size $n$, where $h(n) = o(1)$, computes $f$ correctly with error probability less than $1/3$.}
\end{definition}

These worst-case to rare-case reductions are particularly interesting when $f$ is not believed to have polynomial-time probabilistic algorithms and $t(n) = n^{O(1)}$ since they imply polynomial-time algorithms cannot compute $f^\prime$ even on a vanishingly small fraction of instances. If $f$ is not believed to have $2^{(1-\epsilon)k(n)}$-time algorithms, in the case where $t(n) = n^{O(1)}$, neither should $f^\prime$, but even for an $o(1)$-fraction of instances.

Mainly, this paper shows that from any $\NP$-complete language, $f: \{\, 0, 1 \,\}^* \to \{\, 0, 1 \,\}$, we can construct a function $f^\prime$, such that in polynomially many queries to an oracle $O$ that computes $f'$ and is almost always wrong, we can compute $f$ with very low error probability. In fact, for every $\alpha > 0$, we can construct $f^\prime$ such that $O$ computing $f^\prime$ correctly on a $1/n^\alpha$-fraction of instances is sufficient to compute $f$ in $n^{O(1)}$ queries to $O$. Notably, if $\NP \not\subset \PPOLY$, which is widely believed to be true due to the famous theorem of \cite{Karp1980}, then for any polynomial-sized family of circuits $\{\, C_n \,\}_{n \in \mathbb{N}}$,
\begin{equation*}
\mathcal{P}_{x \leftarrow_r \mathbb{F}^m}[f^\prime(x) = C_n(x)] < \frac{1}{n^\alpha},
\end{equation*}
for all sufficiently large $n$. There is also a protocol by which a verifier $V$ can verify in polynomial time whether a prover, $P$'s claim that $s = f^\prime(x)$ simply by asking $P$ to compute $f^\prime(y)$ for sequences of $y$ chosen by $V$. We will see later that this works even when $P$ can only give an answer on a vanishing but sufficient fraction of queries.

Similar hardness results can be shown from conjectures such as $\NP \not\subset \BPP$, and weaker hypotheses such as $\P^{\SHARPP} \not\subset \PPOLY$ and $\P^{\SHARPP} \not\subset \BPP$ for $\NP$-complete languages known to have parsimonious reductions from all of $\NP$. Under conjectures such as the \textit{Randomized Exponential Time Hypothesis (RETH)} and the \textit{Randomized Strong Exponential Time Hypothesis (RSETH)} \citep{Impagliazzo2001, Calabro2009, Impagliazzo2001b}, which we will state below, we either have very strong hardness results for $f^\prime$ or a path to refutation for these hypotheses, via a barely efficient algorithm on a vanishing fraction of instances for $f^\prime$.

\begin{conjecture}
\label{conjecture:1}
\textbf{Randomized Exponential Time Hypothesis \citep{Dell2014}.} \\
There is an $\epsilon > 0$ such that no probabilistic algorithm correctly decides $3\SAT$ on $n$ variables with correctness probability larger than $2/3$ in time $2^{\epsilon n}$.
\end{conjecture}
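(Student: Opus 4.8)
Conjecture \ref{conjecture:1} is not a result this paper proves; it is the \emph{Randomized Exponential Time Hypothesis}, quoted from \citep{Dell2014} and descended from the Exponential Time Hypothesis of \citep{Impagliazzo2001}. There is therefore nothing to prove here in the usual sense: RETH implies $\P \neq \NP$ (in fact $\NP \not\subseteq \BPP$), so an honest ``proof'' would settle central open problems. The plan is only to record why the hypothesis is plausible and why the precise quantitative form stated above is the natural one to adopt as a black-box input to the reductions of Section \ref{section:7}.

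First I would recall the deterministic ETH --- some $\epsilon > 0$ rules out $2^{\epsilon n}$-time algorithms for $3\SAT$ on $n$ variables --- together with the Sparsification Lemma of \citep{Impagliazzo2001, Impagliazzo2001b}: for every $\delta > 0$, $3\SAT$ on $n$ variables with $\poly(n)$ clauses reduces to an OR of at most $2^{\delta n}$ instances each having only $O(n)$ clauses. This makes the exponent in (R)ETH robust to the clause--variable ratio, so the single-parameter form above ($2^{\epsilon n}$ measured in $n$ alone) is equivalent to the more combinatorially natural $2^{\epsilon(n+m)}$ form. The ``correctness probability larger than $2/3$'' clause is likewise without loss of generality: probability amplification moves the two-sided error to any fixed constant below $1/2$ at the cost of only a $\poly(n)$ factor in running time, hence not in the exponent.

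Second, I would give the usual reason to believe RETH at all: under a standard circuit lower bound (some language decidable in time $2^{O(n)}$ requires Boolean circuits of size $2^{\Omega(n)}$) one has $\BPP = \P$, whence RETH is \emph{equivalent} to ETH; conversely RETH trivially implies ETH. So RETH is no stronger than ``ETH plus a mild derandomization assumption,'' both widely accepted. The only real obstacle is the obvious one: RETH is open and of the same order of difficulty as $\P \neq \NP$, so it is imported wholesale --- everything downstream uses only its exact quantitative statement, never any attempt to establish it.
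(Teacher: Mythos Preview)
Your proposal is correct: the paper does not prove Conjecture~\ref{conjecture:1} at all --- it is stated purely as a hypothesis imported from \citep{Dell2014} and then used as a black-box assumption in Corollary~\ref{corollary:5}. Your observation that there is ``nothing to prove here'' matches the paper exactly; the surrounding discussion you offer (sparsification, amplification, derandomization) is reasonable context but goes beyond what the paper provides, which simply states the conjecture and moves on.
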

\begin{conjecture}
\label{conjecture:2}
\textbf{Randomized Strong Exponential Time Hypothesis \citep{Dell2014, Stephens2019}.} \\
For every $\epsilon > 0$, there is a $k \in \mathbb{N}$ such that no probabilistic algorithm decides $k\SAT$ correctly with correctness probability larger than $2/3$ in time $2^{(1-\epsilon)n}$.
\end{conjecture}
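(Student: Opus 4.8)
The statement above is a hardness \emph{hypothesis} rather than a theorem, so strictly speaking there is nothing to establish unconditionally, and I would not attempt to: RSETH is a genuine open conjecture that, if anything, implies $\P \neq \NP$, ETH, and the deterministic SETH rather than following from any of them. A ``proof proposal'' here therefore consists of (i) recording the routine sanity checks that make the hypothesis defensible, and (ii) fixing the precise form in which it will be invoked later, since that interface is all that the main results of Section~\ref{section:7} actually use.

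For (i): RSETH is the verbatim randomized strengthening of the Strong Exponential Time Hypothesis of \cite{Impagliazzo2001, Impagliazzo2001b}, obtained by replacing ``deterministic time $2^{(1-\epsilon)n}$'' with ``bounded-error probabilistic time $2^{(1-\epsilon)n}$''; the phrasing used here follows \cite{Dell2014, Stephens2019}, who adopt randomized fine-grained hypotheses precisely because the reductions they chain together (sparsification, isolation, downward self-reduction) are themselves randomized. The one concrete check worth making is consistency with the state of the art: the fastest known randomized $k\SAT$ algorithms --- local-search algorithms in the spirit of Sch\"oning and \texttt{PPSZ}, together with their refinements --- run in time $2^{(1 - c_k)n}$ with $c_k = \Theta(1/k)$, so $c_k \to 0$ as $k \to \infty$, which is exactly the decay RSETH posits is unavoidable; no algorithm with $c_k$ bounded away from $0$ is known, so nothing in the literature contradicts the hypothesis.

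For (ii): the plan is to use RSETH (and the weaker RETH, Conjecture~\ref{conjecture:1}) purely as a black-box lower bound fed into the worst-case-to-rare-case machinery. Concretely, once Section~\ref{section:7} produces a $(t(n), 1/n^{\alpha})$-worst-case to rare-case reduction (Definition~\ref{def:2}) with $t(n) = n^{O(1)}$ from the certificate-counting problem attached to $k\SAT$ to the number-theoretic polynomial $f'$, any algorithm computing $f'$ correctly on a $1/n^{\alpha}$-fraction of instances in time $T$ yields --- through parsimonious reductions and Chinese remaindering (Lemma~\ref{lemma:3}) --- a bounded-error algorithm for $k\SAT$ running in time $\poly(n)\cdot T$, and contraposing the $2^{(1-\epsilon)n}$ bound of RSETH then forces $T$ to be essentially exponential. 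The delicate point, and the one I would be most careful about, is not any of this but the parameter bookkeeping along the chain ``$k\SAT$ on $n$ variables $\to$ certificate counting $\to$ $f'$ over $Z_p$ with field size and arity polynomial in $n$'': the input length of $f'$, the arity of the counting polynomial, and the size of the primes all inflate polynomially, so these blow-ups must be tracked honestly to be sure that a subexponential-time algorithm for $f'$ really pulls back to a $2^{(1-\epsilon)n}$ algorithm for $k\SAT$ and not to something strictly weaker. That accounting, not a new idea, is the real work, and it is carried out in Section~\ref{section:7}.
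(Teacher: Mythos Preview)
Your proposal is correct: the statement is a conjecture, not a theorem, and the paper offers no proof of it either---it is simply stated as a hypothesis and then invoked as a black-box lower bound in Corollary~\ref{corollary:6}, exactly along the contrapositive lines you sketch in part~(ii). Your additional sanity checks in part~(i) and the warning about tracking polynomial blow-ups along the reduction chain go somewhat beyond what the paper itself says, but they are accurate and useful context.
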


\section{Generalized Certificate Counting Polynomials}
\label{section:4}
To go forward, first, suppose we have a language $L$ that is $\NP$-complete. $L$ has a verifier $V_L$ that takes in a certificate $z$ of length $n^c$ when the length of the instance $x$ is $n$. Due to the theorems of \cite{Cook1971} and \cite{Levin1973}, from the algorithm of $V_L$, we can compute a $\poly(n)$-sized circuit $C_L$ that takes in $x$ and $z$ as input and outputs $1$ if $z$ certifies that $x \in L$ and $0$ otherwise. We show below in this section that we have constant depth verification circuits for every $\NP$ language, from which we can construct verification polynomials.

Suppose we have a circuit $C_L$ with $n + n^c$ bits of input. We allow the gates of $C_L$ to be of unbounded fan-in. We construct the following polynomial over $\mathbb{Z}_p$, by the following rules:

\begin{enumerate}

\item Let the input variables be $x = (x_i)_{i = 1}^n$ and $z = (z_j)_{j = 1}^{n^c}$. For each input $x_i$ or $\neg x_i$ ($i \in [n]$), the corresponding polynomials are $x_i$ and $1 - x_i$, respectively. Similarly, for each input $z_j$ or $\neg z_j$ ($j \in \left[ n^c \right]$), the corresponding polynomials are $z_j$ and $1 - z_j$, respectively. 

\item For each \texttt{AND} gate with $k$ inputs from gates whose corresponding polynomials are $(g_j)_{j = 1}^k$, the \texttt{AND} gate's corresponding polynomial is $\Pi_{j = 1}^k g_j$.

\item For each \texttt{OR} gate with $k$ inputs from gates whose corresponding polynomials are $(g_j)_{j = 1}^k$, the \texttt{OR} gate's corresponding polynomial is $1 - \Pi_{j = 1}^k (1 - g_j)$.

\item For each \texttt{NOT} gate with input from a gate with corresponding polynomial $g$, the corresponding polynomial for the \texttt{NOT} gate is $1 - g$.

\end{enumerate}

Let $g_{C_L, p}: \mathbb{Z}_p^{n + n^c} \to \mathbb{Z}_p$ be the polynomial corresponding to the output gate of $C_L$. Note that $g_{C_L, p}$ is a multivariate polynomial with coefficients in $\mathbb{Z}_p$ for which $g_{C_L, p}(x, z) \equiv C_L(x, z) \mod p$ when all entries of $x$ and $z$ are restricted to $\{\, 0, 1 \,\}$ values. It is straightforward to see that the corresponding polynomial of each gate computes the output of that gate over $\mathbb{Z}_p$. These polynomials generalize Boolean circuits to take inputs over $\mathbb{Z}_p$.

With this construction, we can prove the following lemma for $\NP$-complete languages.

\begin{lemma} 
\label{lemma:5}
\textbf{Generalized Certificate Counting Polynomials.} \\
For any $\NP$-complete language $L$, there is a polynomial $f_{L, p}: \mathbb{Z}_p^n \to \mathbb{Z}_p$ with coefficients in $\mathbb{Z}_p$ and degree bounded by some polynomial in $|x| = n$, computing the number of accepting certificates over $\mathbb{Z}_p$ of the fact that $x \in L$ for $x \in \{\, 0, 1 \,\}^n$.
\end{lemma}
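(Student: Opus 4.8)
The plan is to build $f_{L,p}$ directly from the gate-labelling polynomial $g_{C_L,p}$ introduced just above the statement, by summing over all possible certificates. Concretely, I would set
\begin{equation*}
f_{L,p}(x) = \sum_{z \in \{\,0,1\,\}^{n^c}} g_{C_L,p}(x, z).
\end{equation*}
Since $g_{C_L,p}(x,z) \equiv C_L(x,z) \bmod p$ whenever $x,z$ are Boolean, each term in the sum is $1$ exactly when $z$ is an accepting certificate for $x$ and $0$ otherwise; hence the sum counts accepting certificates modulo $p$, which is exactly what the lemma asks for on Boolean inputs $x$. The first step is therefore just to record this definition and verify the counting claim on $\{\,0,1\,\}^n$ inputs from the mod-$p$ agreement of $g_{C_L,p}$ with $C_L$.

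Next I would need to argue that $f_{L,p}$, as defined, is genuinely a polynomial over $\mathbb{Z}_p$ in the variables $x_1,\dots,x_n$ with polynomially bounded degree. That $f_{L,p}$ is a polynomial in $x$ is immediate: it is a finite $\mathbb{Z}_p$-linear combination of the polynomials $z \mapsto g_{C_L,p}(x,z)$ evaluated at fixed Boolean points $z$, so the $z$-variables are substituted away and only the $x_i$ remain. For the degree bound, I would track the degree through the inductive construction of $g_{C_L,p}$: an \texttt{AND}/\texttt{OR} gate with $k$ children produces a polynomial of degree at most the sum of the children's degrees, and a \texttt{NOT} gate leaves the degree unchanged, so the degree of $g_{C_L,p}$ is bounded by a product of fan-ins along a path, i.e. at most the number of wires raised to the circuit depth. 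This is where the remark in the text about ``constant depth verification circuits for every $\NP$ language'' does the real work: if $C_L$ has constant depth and polynomial size (unbounded fan-in), then $\deg g_{C_L,p} = \poly(n)$, and summing over $z$ does not increase the degree in $x$, so $\deg f_{L,p} = \poly(n)$ as well. I would make sure to invoke (or prove, if the paper expects it here) the existence of such constant-depth, polynomial-size, unbounded-fan-in verification circuits — this is essentially the observation that an $\NP$ verifier's computation, or equivalently a $\SAT$ instance obtained via the Cook--Levin reduction, can be written as a depth-$O(1)$ (in fact depth-$2$, a CNF) formula of polynomial size once auxiliary certificate variables are allowed.

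The coefficients of $f_{L,p}$ lie in $\mathbb{Z}_p$ since those of $g_{C_L,p}$ do and $\mathbb{Z}_p$ is closed under addition; that part is routine. The main obstacle, and the only step requiring genuine care, is the degree bound: a naive arithmetization of a general polynomial-size circuit of depth $d$ can give degree as large as (size)$^d$, which is not polynomial unless $d = O(1)$. So the crux is to first replace $V_L$'s circuit by an equivalent depth-$O(1)$ verification circuit of polynomial size — using that every $\NP$ relation has a polynomial-size CNF (or bounded-depth $\AC^0$) verifier in the certificate-augmented input — and only then arithmetize. Once that is in place, the product structure of \texttt{AND}/\texttt{OR} arithmetizations over a constant number of layers yields a polynomial degree bound, the sum over $z$ preserves it, and the lemma follows. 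I would close by noting that this $f_{L,p}$ is the ``generalized certificate counting polynomial'' referenced in the paper's contributions, set up precisely so that a downward self-reduction and the sumcheck protocol of Section~\ref{section:2.6} apply to it.
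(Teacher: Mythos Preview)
Your proposal is correct and follows essentially the same approach as the paper: define $f_{L,p}(x)=\sum_{z\in\{0,1\}^{n^c}} g_{C_L,p}(x,z)$, and obtain the polynomial degree bound by first replacing the verifier with a constant-depth, polynomial-size circuit so that the arithmetization has degree $O(s^d)=\poly(n)$. The only cosmetic difference is in how the constant-depth verifier is obtained: the paper composes an $\AC^0$ reduction to $\SAT$ \citep{Agrawal2001} with the depth-$2$ $\SAT$ verifier, whereas you go directly via the Cook--Levin observation that padding the certificate with gate values yields a polynomial-size CNF---both routes are standard and lead to the same conclusion.
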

\begin{proof} 
Given any circuit $C_L$ of size $s$ and depth $d$, the corresponding polynomial has degree at most $O \left( s^d \right)$. This can be seen due to the following recurrence relation:
\begin{equation*}
\text{deg}_{\max}(d) \leq s \cdot \text{deg}_{\max}(d - 1),
\end{equation*}
where $\text{deg}_{\max}(d)$ is the largest degree of any corresponding polynomial of a circuit of size $s$ and depth $d$. If $s$ is a polynomial in $n$, and $d$ is constant, then the degree $\text{deg}(g_{C_L, p})$ of $g_{C_L, p}$ is bounded by a polynomial in $n$.

Due to the theorems of \cite{Cook1971} and \cite{Levin1973}, for any language $L$ in $\NP$, there is a polynomial-sized circuit $C_L$ taking an input $x \in \{\, 0, 1 \,\}^n$ and a potential certificate $z \in \{\, 0, 1 \,\}^{n^c}$ and outputting $1$ if $z$ is an accepting certificate for $x \in L$ and $0$ otherwise. We will prove that $C_L$ has a constant depth, implying that the corresponding polynomial $g_{C_L, p}$ has a degree bounded by a polynomial in $n$.

There is a constant depth polynomial-sized circuit computing the bits of reduction from any language in $\NP$ to $\SAT$ \citep{Agrawal2001}. Moreover, the description of this circuit is computable in poly-logarithmic time. Now that we have generated the bits of the $\SAT$ formula, we can use the constant degree verifier for the $\SAT$ formula. The depth of this new circuit, $C_L$, is the depth of the reduction circuit plus the depth of the $\SAT$ verifier. This complete circuit is in $\AC^0$, and its description is computable in polynomial-time.

Consider the following polynomial:
\begin{equation} 
\label{eq:1}
f_{L, p}(x) = \sum_{z \in \{\, 0, 1 \,\}^{n^c}} g_{C_L, p} (x, z).
\end{equation}
Here, $f_{L, p}(x)$ can be seen as a polynomial computing the number of accepting certificates over $\mathbb{Z}_p$ for the fact that $x \in L$, provided that $x$ has entries only in $\{\, 0, 1 \,\}$. Note that $\text{deg}(f_{L, p})$ is bounded by a polynomial in $n$ due to the fact that $\text{deg}(f_{L, p}) = \text{deg}(g_{C_L, p})$.
\end{proof}

We further generalize the functions $f_{L, p}$ and $g_{C_L, p}$ to $f^\prime_{L, p}: \mathbb{Z}_p^{n + 2 n^c} \to \mathbb{Z}_p$ and $g^\prime_{C_L, p}: \mathbb{Z}_p^{n + 3 n^c} \to \mathbb{Z}_p$ respectively, by introducing new variables $a \in \mathbb{Z}_p^{n^c}$ and $b \in \mathbb{Z}_p^{n^c}$ as follows:
\begin{equation} 
\label{eq:2}
g^\prime_{C_L, p}(x, z, a, b) = g_{C_L, p}(x, az + b),
\end{equation}
and using Equation \eqref{eq:2},
\begin{equation} 
\label{eq:3}
f^\prime_{L, p}(x, a, b) = \sum_{z \in \{\, 0, 1 \,\}^{n^c}} g^\prime_{C_L, p}(x, z, a, b).
\end{equation}
Equation \eqref{eq:1} is a special case of Equation \eqref{eq:3}: putting $a = (1)_{i = 1}^{n^c}$ and $b = (0)_{i = 1}^{n^c}$ in Equation \eqref{eq:3}, we get Equation \eqref{eq:1} which is a certificate counting polynomial over $\mathbb{Z}_p$ of the fact that $x \in L$. The motivation for this change is the following lemma.
\begin{lemma}
\label{lemma:6}
\textbf{The Self-Reduction Property of $f^\prime_{L, p}$.} 
\begin{equation} 
\label{eq:4}
\begin{split}
& \sum_{\left( (z_j)_{j = 1}^{i - 1}, (z_j)_{j = i + 1}^{n^c} \right) \in \{\, 0, 1 \,\}^{n^c - 1}} g^\prime_{C_L, p} \left( x, ((z_j)_{j = 1}^{i - 1}, r, (z_j)_{j = i + 1}^{n^c}, a, b \right) \\
& = 2^{-1} \cdot f^\prime_{L, p} \left( x, (a_j)_{j = 1}^{i - 1}, 0, (a_j)_{j = i + 1}^{n^c}, (b_j)_{j = 1}^{i - 1}, a_ir + b_i, (b_j)_{j = i + 1}^{n^c} \right), \\
& \forall i \in [n^c].
\end{split}
\end{equation}
\end{lemma}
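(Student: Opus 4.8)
The plan is to verify Equation \eqref{eq:4} by a direct algebraic manipulation, unwinding both sides according to the definitions \eqref{eq:2} and \eqref{eq:3}. First I would fix $i \in [n^c]$ and examine the right-hand side. By the definition \eqref{eq:3} of $f^\prime_{L,p}$, evaluating it at the modified argument $\left(x, (a_j)_{j=1}^{i-1}, 0, (a_j)_{j=i+1}^{n^c}, (b_j)_{j=1}^{i-1}, a_i r + b_i, (b_j)_{j=i+1}^{n^c}\right)$ gives a sum over $z \in \{0,1\}^{n^c}$ of $g^\prime_{C_L,p}$ evaluated at the same modified $a,b$. Expanding $g^\prime_{C_L,p}$ via \eqref{eq:2}, the $i$-th coordinate of the linear form $az+b$ becomes $0 \cdot z_i + (a_i r + b_i) = a_i r + b_i$, which no longer depends on $z_i$ at all, while every other coordinate $j \neq i$ remains $a_j z_j + b_j$. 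Hence the summand is constant in $z_i$, and the sum over $z_i \in \{0,1\}$ just contributes a factor of $2$. This is exactly the $2$ that the factor $2^{-1}$ on the right cancels.

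Next I would compare what remains with the left-hand side. After pulling out the factor of $2$ from summing over $z_i$, the right-hand side equals $\sum_{\left((z_j)_{j=1}^{i-1}, (z_j)_{j=i+1}^{n^c}\right) \in \{0,1\}^{n^c-1}} g_{C_L,p}\!\left(x, v\right)$, where $v$ is the vector whose $i$-th coordinate is $a_i r + b_i$ and whose $j$-th coordinate ($j \neq i$) is $a_j z_j + b_j$. On the left-hand side, $g^\prime_{C_L,p}\!\left(x, \left((z_j)_{j=1}^{i-1}, r, (z_j)_{j=i+1}^{n^c}\right), a, b\right)$ unfolds via \eqref{eq:2} to $g_{C_L,p}(x, a z' + b)$ where $z'$ is the certificate vector with $z'_i = r$ and $z'_j = z_j$ for $j \neq i$; so $a z' + b$ has $i$-th coordinate $a_i r + b_i$ and $j$-th coordinate $a_j z_j + b_j$ for $j \neq i$ — precisely the vector $v$ above. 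Thus the two sides agree term by term over the same index set $\{0,1\}^{n^c-1}$, which establishes the identity.

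The only genuine subtlety — and the step I would be most careful about — is bookkeeping of the tuple/Cartesian-product notation: making sure that the substitution $a_i \mapsto 0$ on the $a$-side is what decouples the $z_i$-dependence, that the substitution $b_i \mapsto a_i r + b_i$ on the $b$-side is what reinstates the ``frozen at $r$'' value, and that the remaining coordinates line up so that the affine reparametrization $z \mapsto az+b$ really does coincide coordinate-for-coordinate on both sides. There is also the trivial point that $2^{-1}$ denotes the multiplicative inverse of $2$ in $\mathbb{Z}_p$, which exists for every odd prime $p$; I would note this requires $p \neq 2$, which is harmless since all primes we use will be large. No use of the degree bounds or the Schwartz–Zippel lemma is needed here — this lemma is purely a restatement of the downward self-reducibility structure built into the definition of $f^\prime_{L,p}$, and its role is to feed the sumcheck protocol of Section~\ref{section:2.6}, where each round will invoke \eqref{eq:4} to express a partial sum of $g^\prime_{C_L,p}$ (with one certificate coordinate fixed to a random field element $r$) as a single evaluation of $f^\prime_{L,p}$ at a modified point.
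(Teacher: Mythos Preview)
Your proposal is correct and follows essentially the same route as the paper: expand the right-hand side via \eqref{eq:3} and \eqref{eq:2}, observe that setting $a_i \mapsto 0$ and $b_i \mapsto a_i r + b_i$ makes the $i$-th coordinate of $az+b$ equal to $a_i r + b_i$ independently of $z_i$, collapse the sum over $z_i \in \{0,1\}$ to a factor of $2$, and then match the remaining sum against the left-hand side term by term. Your added remark that $2^{-1}$ requires $p$ odd is a small nicety the paper leaves implicit.
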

\begin{proof}
Using Equations \eqref{eq:2} and \eqref{eq:3}, we get
\begin{equation} 
\label{eq:5}
\begin{split}
& f^\prime_{L, p} \left( x, (a_j)_{j = 1}^{i - 1}, 0, (a_j)_{j = i + 1}^{n^c}, (b_j)_{j = 1}^{i - 1}, a_ir + b_i, (b_j)_{j = i + 1}^{n^c} \right) \\
& = \sum_{z \in \{\, 0, 1 \,\}^{n^c}} g^\prime_{C_L, p} \left( x, z, (a_j)_{j = 1}^{i - 1}, 0, (a_j)_{j = i + 1}^{n^c}, (b_j)_{j = 1}^{i - 1}, a_ir + b_i, (b_j)_{j = i + 1}^{n^c} \right) \\
& = \sum_{z \in \{\, 0, 1 \,\}^{n^c}} g_{C_L, p} \left( x, \left( (a_j)_{j = 1}^{i - 1}, 0, (a_j)_{j = i + 1}^{n^c} \right) z + \left( (b_j)_{j = 1}^{i - 1}, a_ir + b_i, (b_j)_{j = i + 1}^{n^c} \right) \right) \\
& = \sum_{\left( (z_j)_{j = 1}^{i - 1}, (z_j)_{j = i + 1}^{n^c} \right) \in \{\, 0, 1 \,\}^{n^c - 1}} g_{C_L, p} \left( x, \left( (a_j)_{j = 1}^{i - 1}, 0, (a_j)_{j = i + 1}^{n^c} \right) \left( (z_j)_{j = 1}^{i - 1}, 0, (z_j)_{j = i + 1}^{n^c} \right) \right. \\
& \left. + \left( (b_j)_{j = 1}^{i - 1}, a_ir + b_i, (b_j)_{j = i + 1}^{n^c} \right) \right) \\
& + \sum_{\left( (z_j)_{j = 1}^{i - 1}, (z_j)_{j = i + 1}^{n^c} \right) \in \{\, 0, 1 \,\}^{n^c - 1}} g_{C_L, p} \left( x, \left( (a_j)_{j = 1}^{i - 1}, 0, (a_j)_{j = i + 1}^{n^c} \right) \left( (z_j)_{j = 1}^{i - 1}, 1, (z_j)_{j = i + 1}^{n^c} \right) \right. \\
& \left. + \left( (b_j)_{j = 1}^{i - 1}, a_ir + b_i, (b_j)_{j = i + 1}^{n^c} \right) \right) \\
& = \sum_{\left( (z_j)_{j = 1}^{i - 1}, (z_j)_{j = i + 1}^{n^c} \right) \in \{\, 0, 1 \,\}^{n^c - 1}} g_{C_L, p} \left( x, a \left( (z_j)_{j = 1}^{i - 1}, r, (z_j)_{j = i + 1}^{n^c} \right) + b \right) \\
& + \sum_{\left( (z_j)_{j = 1}^{i - 1}, (z_j)_{j = i + 1}^{n^c} \right) \in \{\, 0, 1 \,\}^{n^c - 1}} g_{C_L, p} \left( x, a \left( (z_j)_{j = 1}^{i - 1}, r, (z_j)_{j = i + 1}^{n^c} \right) + b \right) \\
& = \sum_{\left( (z_j)_{j = 1}^{i - 1}, (z_j)_{j = i + 1}^{n^c} \right) \in \{\, 0, 1 \,\}^{n^c - 1}} g^\prime_{C_L, p} \left( x, \left( (z_j)_{j = 1}^{i - 1}, r, (z_j)_{j = i + 1}^{n^c} \right), a, b \right) \\
& + \sum_{\left( (z_j)_{j = 1}^{i - 1}, (z_j)_{j = i + 1}^{n^c} \right) \in \{\, 0, 1 \,\}^{n^c - 1}} g^\prime_{C_L, p} \left( x, \left( (z_j)_{j = 1}^{i - 1}, r, (z_j)_{j = i + 1}^{n^c} \right), a, b \right) \\
& = 2 \sum_{\left( (z_j)_{j = 1}^{i - 1}, (z_j)_{j = i + 1}^{n^c} \right) \in \{\, 0, 1 \,\}^{n^c - 1}} g^\prime_{C_L, p} \left( x, \left( (z_j)_{j = 1}^{i - 1}, r, (z_j)_{j = i + 1}^{n^c} \right), a, b \right).
\end{split}
\end{equation}
Multiplying Equation \eqref{eq:5} by $2^{-1}$, we get Equation \eqref{eq:4}.
\end{proof}

Similarly, we can build the polynomial $g_{C_L}: \mathbb{Z}^{n + n^c} \to \mathbb{Z}$ such that $g_{C_L}(x, z) = C_L(x, z)$ when all entries of $x$ and $z$ are restricted to $\{\, 0, 1 \,\}$ values. Similar to Equation \eqref{eq:1}, we can also build the certificate counting polynomial $f_L: \mathbb{Z}^n \to \mathbb{Z}$ such that $f_L(x)$ counts the number of certificates over $\mathbb{Z}$ for the fact that $x \in L$ when all entries of $x$ are restricted to $\{\, 0, 1 \,\}$ values. Similar to Equations \eqref{eq:2} and \eqref{eq:3}, we can also build the generalized functions $f^\prime_L: \mathbb{Z}^{n + 2 n^c} \to \mathbb{Z}$ and $g^\prime_L: \mathbb{Z}^{n + 3 n^c} \to \mathbb{Z}$, respectively.

With this property, we can ``simulate'' the sumcheck protocol (Section \ref{section:2.6}) and verify the answers. Our idea is to attempt to compute $f^\prime_{L, p}(x)$ for sufficiently many primes $p$, verify those answers, and reconstruct $f^\prime_L \left( x, (1)_{i = 1}^{n^c}, (0)_{i = 1}^{n^c} \right)$ over $\mathbb{Z}$, where $f^\prime_L$ is constructed from the verifier circuit of an $\NP$-complete problem.

\section{The Oracle Sumcheck Protocol over $\mathbb{Z}_p$}
\label{section:5}

In Section \ref{section:4}, we computed a ``generalized'' certificate counting polynomial $f^\prime_{L, p}$ for an $\NP$-complete language $L$ having a polynomial-time computable verification circuit of polynomial-size and constant depth. Suppose we are given an oracle $O$ for computing $f^\prime_{L, p}$ that is correct on a $1/n^\alpha$-fraction of instances of input size $n$. Our idea is to use the \textit{Oracle Sumcheck Protocol (OSP)} (Algorithm \ref{alg:1}) that implements the ideas introduced in Section \ref{section:2.6}. It simulates an interactive proof between us and the oracle $O$, assisted by the STV list-decoder (Lemma \ref{lemma:2}) to help it amplify from $1/n^\alpha$-correctness. Now, the STV list-decoder will provide us with $\poly(n)$ many machines that compute different polynomials, each with a reasonably high agreement with $O$, and the task that remains is to identify which one of these machines computes $f^\prime_{L, p}$ or find out if none of them do. We use the OSP algorithm to verify whether the answer given by a machine $M^O$ is correct for $f^\prime_{L, p}$.

\begin{algorithm}
\caption{$\text{OSP}\left( M^O, x, a, b, p \right)$}
\label{alg:1}
\Comment{The Oracle Sumcheck Protocol \hspace{10.5cm}} \\
\Comment{All computations are done over $\mathbb{Z}_p$} \\
\Comment{The oracle $O$ computes the function $f^\prime_{L, p}$ that is correct on a $1/n^\alpha$-fraction of instances of input size $n$} \\
\Comment{Inputs: $M^O$ is a randomized oracle machine that may compute $f^\prime_{L, p}$, $x \in \mathbb{Z}_p^n$, $a \in \mathbb{Z}_p^{n^c}$,  $b \in \mathbb{Z}_p^{n^c}$}, $p$ is a prime \\
\Comment{Output: \texttt{ACCEPT} if $M^O(x, a, b) = f^\prime_{L, p}(x, a, b)$, \texttt{REJECT} otherwise}
\begin{algorithmic}
\State $i \gets 0$
\While{$i < n^c$} 
    \State $s_i \gets M^O(x, a, b, p)$
    \State $i \gets i + 1$
    \State $c \gets a_i$
    \State $d \gets b_i$
    \For{each $r \in \mathbb{Z}_p$ \do} \Comment{Applying Lemma \ref{lemma:6} to get a new value of $a$ and $b$}
          \State $z_i \gets r$ \Comment{$(z_j)_{j = 0}^{i - 1}$ are already fixed in Lemma \ref{lemma:6}}
          \State $a_i \gets 0$
          \State $b_i \gets cr + d$
          \State $s_{ir} \gets 2^{-1} \cdot M^O(x, a, b, p)$ \Comment{Computing the LHS of Equation \eqref{eq:4}}
     \EndFor
     \State using polynomial interpolation, compute $h_i : \mathbb{Z}_p \to \mathbb{Z}_p$ such that $h_i(r) = s_{ir}, \quad \forall r \in \mathbb{Z}_p$.
     \If{the degree of $h_i$ is greater than that of $z_i$ in $g^\prime_{C_L, p}$}
        \State \Return \texttt{REJECT}
     \EndIf
     \If{$h_i(0) + h_i(1) \neq s_{i - 1}$}
        \State \Return \texttt{REJECT}
     \EndIf
     \State pick a random $r_i \in \mathbb{Z}_p$
     \State $z_i \gets r_i$
     \State $a_i \gets 0$
     \State $b_i \gets cr_i + d$
\EndWhile
\If{$h_i(r_i) = g^\prime_{C_L, p}(x, a, b)$} \Comment{Use the circuit $C_L$ to compute $g^\prime_{C_L, p}$}
        \State \Return \texttt{ACCEPT}
\Else \State \Return \texttt{REJECT}
     \EndIf
\end{algorithmic}
\end{algorithm}

Notice that OSP takes $\poly(p, n)$-time to run. If all primes $p$ we consider are bounded by a polynomial in $n$, then the verification process requires $\poly(n)$-time, including calls to $O$. Using OSP, we can recover the polynomial $f^\prime_{L, p}$ in polynomial-time with very small probability of error using the following lemma.

\begin{lemma} 
\label{lemma:7}
If any oracle $O$ correctly computes the polynomial $f^\prime_{L, p}$ of degree $d$ over $\mathbb{Z}_p$ on more than a $1 / n^\alpha$-fraction of instances with $|x| = n$ and $p > 2 d n^{2 \max \{\, \alpha, c \,\}}$, then with an error probability of at most $1 / 2^{q(n)}$ ($q$ is a polynomial), $f^\prime_{L, p}$ can be computed in $\poly(n)$-time, provided that $p$ grows as a polynomial in $n$.
\end{lemma}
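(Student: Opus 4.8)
The plan is to combine the STV list-decoder (Lemma \ref{lemma:2}) with the Oracle Sumcheck Protocol (Algorithm \ref{alg:1}) to pin down, among the polynomially many candidate machines the list-decoder produces, the unique one computing $f^\prime_{L, p}$, and then to argue that running OSP on that machine recovers the polynomial correctly with overwhelming probability. First I would check that the hypotheses of Lemma \ref{lemma:2} are met: we need $\epsilon = 1/n^\alpha > \sqrt{2d/|\mathbb{F}|}$ with $\mathbb{F} = \mathbb{Z}_p$, i.e.\ $p > 2d n^{2\alpha}$, which is implied by the stated bound $p > 2 d n^{2\max\{\alpha, c\}}$. So the list-decoder runs in $\poly(n, d, n^\alpha, \log p) = \poly(n)$ time (using that $d$ and $p$ are polynomial in $n$) and outputs $O(1/\epsilon) = O(n^\alpha) = \poly(n)$ randomized oracle machines $M_1^O, \dots, M_k^O$, each computing some degree-$d$ polynomial $\mathbb{Z}_p^m \to \mathbb{Z}_p$ (where $m = n + 2n^c$) with per-point error at most $2^{-q'(n)}$, and one of which — say $M_{i^*}^O$ — computes $f^\prime_{L, p}$.

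Next I would run $\text{OSP}(M_j^O, x, (1)^{n^c}, (0)^{n^c}, p)$ for each $j \in [k]$ and output the value $M_j^O(x, (1)^{n^c}, (0)^{n^c}, p)$ returned by the first machine for which OSP outputs \texttt{ACCEPT}. The two properties to establish are \emph{completeness} and \emph{soundness} of OSP in this noisy setting. For completeness, if $M_j^O$ genuinely computes $f^\prime_{L, p}$ everywhere, then by Lemma \ref{lemma:6} every interpolated univariate polynomial $h_i$ agrees with the true partial sum, so the degree checks and the consistency checks $h_i(0) + h_i(1) = s_{i-1}$ all pass, and the final check against $g^\prime_{C_L, p}(x, a, b)$ — computed directly from the circuit $C_L$ — succeeds; hence OSP accepts with probability $1$. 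Since $M_{i^*}^O$ only differs from $f^\prime_{L, p}$ with probability $2^{-q'(n)}$ at each of the $\poly(n)$ points queried during its OSP run, a union bound gives that OSP accepts $M_{i^*}^O$ with probability at least $1 - \poly(n)\cdot 2^{-q'(n)}$. For soundness, I would mimic the standard sumcheck analysis: if the polynomial $g_j$ computed by $M_j^O$ satisfies $g_j(x, (1)^{n^c}, (0)^{n^c}) \neq f^\prime_{L, p}(x, (1)^{n^c}, (0)^{n^c})$ but OSP nonetheless reaches the $i$-th round with a "correct-looking" claimed value, then the interpolated $h_i$ must differ from the true partial-sum polynomial, which has degree at most $\deg_{z_i}(g^\prime_{C_L, p}) \le d$; two distinct degree-$\le d$ polynomials agree on at most $d$ points, so the random $r_i \leftarrow_r \mathbb{Z}_p$ propagates the discrepancy except with probability $d/p$. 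By a union bound over the $n^c$ rounds, a wrong machine is accepted with probability at most $n^c d / p$, which is $o(1)$ by the choice $p > 2 d n^{2\max\{\alpha, c\}} \ge 2 d n^{2c} > d n^c \cdot 3$ for large $n$; standard amplification (repeating OSP $\poly(n)$ times with fresh randomness and taking a majority / requiring all-accept) drives this below $2^{-q(n)}$. A final union bound over the $k = \poly(n)$ machines keeps the total error at $2^{-q(n)}$ for a suitable polynomial $q$, and the whole procedure is $\poly(n)$ time since OSP itself is $\poly(p, n) = \poly(n)$ and there are $\poly(n)$ machines.

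The main obstacle I anticipate is the soundness argument in the \emph{oracle} setting — not the textbook sumcheck soundness itself, but the subtlety that the machine $M_j^O$ is randomized and need not be consistent across the multiple calls OSP makes to it within a single round (the loop over $r \in \mathbb{Z}_p$ issues $p$ separate queries). One has to argue that the interpolated $h_i$ is nonetheless, with high probability, a fixed low-degree polynomial: this is exactly why Lemma \ref{lemma:2} guarantees each $M_j^O$ computes \emph{some} fixed degree-$d$ polynomial $g_j$ with error $\le 2^{-q'(n)}$ \emph{per point}, so a union bound over the $p+1 = \poly(n)$ interpolation points in a round shows $h_i$ equals the corresponding partial sum of $g_j$ except with probability $\poly(n) 2^{-q'(n)}$; conditioned on that good event, the classical Schwartz-Zippel propagation argument applies verbatim. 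Handling this interplay between the list-decoder's per-point guarantee, the $\poly(n)$ queries per round, the $n^c$ rounds, and the $k$ machines — keeping all the union bounds under control so the final error is still $2^{-q(n)}$ for a genuine polynomial $q$ — is the part that needs care, but it is routine bookkeeping once the structure above is in place.
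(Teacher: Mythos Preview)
Your proposal is correct and follows essentially the same approach as the paper: invoke the STV list-decoder to obtain $O(n^\alpha)$ candidate machines, run the Oracle Sumcheck Protocol on each, establish completeness for $M^O_{f'_{L,p}}$ via a union bound over the $\poly(n)$ queries (each erring with probability $2^{-q'(n)}$), establish soundness via the standard Schwartz--Zippel propagation giving a $dn^c/p$ bound per run, then amplify by independent repetition and union-bound over the candidate machines. Your explicit treatment of the per-point randomness of the list-decoder's machines (conditioning on the event that all $\poly(n)$ interpolation queries agree with the underlying polynomial $g_j$) is in fact a point the paper leaves somewhat implicit, so your bookkeeping is if anything slightly more careful.
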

\begin{proof}
Suppose $O$ is correct on more than a $1 / n^\alpha$-fraction of instances over $\mathbb{Z}_p$. Due to Lemma \ref{lemma:2}, the STV list-decoder returns $O(n^\alpha)$ randomized oracle machines that err with probability at most $1 / 2^{q_1(n)}$ for some polynomial $q_1$, each computing a polynomial that is ``close'' to $O$. One of these machines computes $f^\prime_{L, p}$\footnote{All of these machines compute polynomials with the same domain and range as $f^\prime_{L, p}$.}. Once we identify this machine, our job is complete, since the machine $M^O_{f^\prime_{L, p}}$ associated with $f^\prime_{L, p}$ computes it with exponentially low error probability.

For each machine $M^O_f$, we use OSP (Algorithm \ref{alg:1}) on any input. We argue that $M^O_{f^\prime_{L, p}}$ passes this protocol with high probability and that any machines computing other polynomials fail with high probability. This proof follows the same line of reasoning as that in the original paper introducing the sumcheck protocol by \cite{Lund1992}.

We first attempt to show that OSP passes with high probability for $M^O_{f^\prime_{L, p}}$. Notice that if $M^O_{f^\prime_{L, p}}$ computed $f^\prime_{L, p}$ with probability $1$, it would pass the protocol with probability $1$. If $M^O_{f^\prime_{L, p}}$ makes no errors in computing the queries given to it, then the protocol passes. Since the probability that any query is incorrect is at most $1 / 2^{q_1(n)}$ for some polynomial $q_1$, and we have $\poly(p, n) = \poly(n)$ queries, we can union-bound the probability that at least one error occurs.
\begin{equation}
\label{eq:6}
\begin{split}
& \mathcal{P} \left[ M^O_{f^\prime_{L, p}} \text{ makes a mistake on at least one query} \right] \\
& = \mathcal{P} \left[ \cup_{j = 1}^{\poly(n)} \left( M^O_{f^\prime_{L, p}} \text{ makes a mistake on query } j \right) \right] \\
& \leq \sum_{j = 1}^{\poly(n)} \frac{1}{2^{q_1(n)}} \\
& = \frac{\poly(n)}{2^{q_1(n)}} \\
& \leq \frac{1}{2^{q_2(n)}},
\end{split}
\end{equation}
for some polynomial $q_2(n)$.

For any $M^O_f$ at all, if $s_0 \neq f^\prime_{L, p}(x, a, b)$, then with high probability, OSP rejects. We argue that for an OSP that requires $k$ variable settings, the probability of passing the protocol is bounded from above by 
\begin{equation} 
\label{eq:7}
\frac{dk}{|\mathbb{Z}_p|} = \frac{dk}{p}. 
\end{equation}
We will prove this proposition by induction on $k$. Let $P(k)$ be the induction hypothesis as given above in Equation \eqref{eq:7}. \\
\textit{Basis Step:} For $k = 0$, where $s_0 \neq f^\prime_{L, p}(x, a, b)$, we can compute $f^\prime_{L, p}(x, a, b)$ and reject with the probability of passing to be at most $0$, implying that $P(0)$ is correct. \\
\textit{Induction Step:} Using strong induction, assuming that $P(k)$ is true for all $k \in \{\, 0 \,\} \cup [K - 1]$, we will prove $P(K)$. Suppose $s_0 \neq f^\prime_{L, p}(x, a, b)$ and $K$ variables are yet to be set. We will have constructed a polynomial $h_1$ of degree at most $d$ with the help of $M^O$ using polynomial interpolation\footnote{If we do not check the degree of $h_1$, it can ``fool'' us.}. Note that if $h_1$ does not coincide with the corresponding sum of $g^\prime_{C_L, p}$ (Equation \eqref{eq:4}), then when we choose an element $r_1$ from $\mathbb{Z}_p$, the probability that $h_1(r_1)$ is equal to the sum of $g^\prime_{C_L, p}$ (Equation \eqref{eq:4}), with the appropriate parameter set to $r_1$ is at most $d / p$, using the Schwartz-Zippel Lemma (\ref{lemma:1}). Hence, we obtain the following:
\begin{equation}
\label{eq:8}
\begin{split}
& \mathcal{P}[\text{OSP passes with $K$ settings remaining}] \leq \mathcal{P} \left[ \text{$h_1(r_1)$ coincides with the sum of $g^\prime_{C_L, p}$} \right] \\
& + \mathcal{P} \left[\text{$h_1(r_1)$ does not coincide with the sum of $g^\prime_{C_L, p}$ but passes with $K - 1$ steps remaining} \right] \\
& \leq \frac{d}{p} \cdot 1 + \frac{d(K - 1)}{p} \cdot 1 \\
& = \frac{dK}{p} \\
& \leq \frac{1}{n^c},
\end{split}
\end{equation}
since $p > 2 d n^{2c}$ and $K = n^c$, implying that $P(K)$ is true.

We can repeat the protocol $\poly(n)$ many times and approve this machine if it passes every time. From equation \eqref{eq:6}, due to the union-bound, we can still keep the probability of $M^O_{f^\prime_{L, p}}$ failing to be exponentially low,
\begin{equation*}
\frac{\poly(n)}{2^{q_2(n)}} \leq \frac{1}{2^{q(n)}},
\end{equation*}
for some polynomial $q$. From Equation \eqref{eq:8}, the probability of some $M \neq M^O_{f^\prime_{L, p}}$ passing all times is bounded above by 
\begin{equation*}
\left( \frac{1}{n^c} \right)^{\poly(n)} \leq \frac{1}{2^{q_4(n)}},
\end{equation*}
for some polynomial $q_4$. Due to the polynomial union-bound, the probability that even one $M \neq M^O_{f^\prime_{L, p}}$ that the STV list-decoder gives us passes is
\begin{equation*}
\frac{n^\alpha - 1}{2^{q_4(n)}} \leq \frac{1}{2^{q(n)}},
\end{equation*}
for some polynomial $q$.

Once we find $M^O_{f^\prime_{L, p}}$, we can ask it for the original computation we needed. We can also verify it using OSP. We could have also done this as our original verification mechanism to find $M^O_{f^\prime_{L, p}}$. In all cases, with the provided sufficient correctness from $O$, we can compute $f^\prime_{L, p}$ with exponentially low error probability.
\end{proof}

\section{Reconstructing the Certificate Counting Polynomials over $\mathbb{Z}$} 
\label{section:6}

In this section, we will reconstruct the certificate counting polynomial, $f^\prime_L \left( x, (1)_{i = 1}^{n^c}, (0)_{i = 1}^{n^c} \right)$, over $\mathbb{Z}$ using the Chinese remainder theorem (Lemma \ref{lemma:3}). The number of possible certificates can be up to $2^{n^c}$. Therefore, if we use the Chinese remainder theorem on more than $n^c$ distinct primes $(p_i)_{i = 1}^{n^c}$, we will get the correct answers because $\prod_{i = 1}^{n^c} p_i > 2^{n^c}$.

Now, for each $\beta > 0$, we define the number-theoretic function
\begin{equation*}
f^{\prime \prime}_{L, \beta} : \cup_{p \in \left( n^\beta, n^\beta + \frac{n^\beta}{n + 2 n^c} \right)_p} \mathbb{Z}_p^n \times \mathbb{Z}_p^{n^c} \times \mathbb{Z}_p^{n^c} \times \{\, p \,\} \to \cup_{p \in \left( n^\beta, n^\beta + \frac{n^\beta}{n + 2 n^c} \right)_p} \mathbb{Z}_p \times \{\, p \,\},
\end{equation*}
to be computed by any potential oracle $O$ as
\begin{equation*}
f^{\prime \prime}_{L, \beta}(x, a, b, p) = \left( f^\prime_{L, p}(x, a, b), p \right),
\end{equation*}
where, $p \in \left( n^\beta, n^\beta + \displaystyle \frac{n^\beta}{n + 2 n^c} \right)_p$, $x \in \mathbb{Z}_p^n$, $a \in \mathbb{Z}_p^{n^c}$, and $b \in \mathbb{Z}_p^{n^c}$. We will now prove the following lemma that enables us to reconstruct the certificate counting polynomial, $f^\prime_L \left( x, (1)_{i = 1}^{n^c}, (0)_{i = 1}^{n^c} \right)$, over $\mathbb{Z}$.

\begin{lemma}
\label{lemma:8}
\textbf{Reconstructing the Certificate Counting Polynomials over $\mathbb{Z}$.} \\
For each $\alpha > 0$, there is a $\beta > 0$ such that if $f^{\prime \prime}_{L, \beta}$ is computable by an oracle $O$ on a $1 / n^\alpha$-fraction of instances, then we can reconstruct the certificate counting polynomial, $f^\prime_L \left( x, (1)_{i = 1}^{n^c}, (0)_{i = 1}^{n^c} \right)$, over $\mathbb{Z}$ with high probability.
\end{lemma}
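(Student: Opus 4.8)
The plan is to chain together the machinery built in the previous sections: the STV list-decoder (Lemma~\ref{lemma:2}), the oracle sumcheck protocol and its correctness (Lemma~\ref{lemma:7}), the distribution of primes (Lemma~\ref{lemma:4}), and the Chinese remainder theorem (Lemma~\ref{lemma:3}). First I would fix $\alpha > 0$ and determine the right $\beta$. The interval $\left( n^\beta, n^\beta + \frac{n^\beta}{n + 2n^c} \right)$ has width $\Theta(n^{\beta - c})$ (since $n^c$ dominates $n$), so by Lemma~\ref{lemma:4} --- concretely, the remark that there are at least $m^{0.474}$ primes between $m$ and $2m$, localized to this shorter interval --- it contains at least $\mathrm{poly}(n)$ primes, and in particular more than $n^c$ of them once $\beta$ is chosen large enough relative to $c$. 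Simultaneously, $\beta$ must be large enough that every prime $p$ in the interval satisfies the hypothesis $p > 2 d\, n^{2\max\{\alpha, c\}}$ of Lemma~\ref{lemma:7}, where $d = \deg(f^\prime_{L,p}) = \mathrm{poly}(n)$; since $p > n^\beta$, taking $\beta$ exceeding the degree of that polynomial plus $2\max\{\alpha,c\}$ suffices. Thus a single choice $\beta = \beta(\alpha, c, d)$ works.

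The next step is the averaging argument that converts correctness ``on a $1/n^\alpha$-fraction of instances'' of $f^{\prime\prime}_{L,\beta}$ --- where instances are tagged by the prime $p$ --- into correctness on a decent fraction of instances \emph{for each of many individual primes $p$}. This is exactly the concern flagged in Section~\ref{section:2.5}: a priori the oracle could concentrate all its correctness on one enormous prime. But because all primes in the interval are within a $(1 + o(1))$ factor of $n^\beta$, the domains $\mathbb{Z}_p^{n + 2n^c} \times \{p\}$ all have essentially the same size, so no single prime can account for more than a $(1+o(1))/\pi\!\left(n^\beta, n^\beta + \frac{n^\beta}{n+2n^c}\right)$ fraction of the total instance space. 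A counting argument then shows that at least, say, half of the primes $p$ in the interval have the property that $O$ computes $f^\prime_{L,p}$ correctly on more than a $\frac{1}{2 n^{\alpha+1}}$-fraction (or some similar $1/\mathrm{poly}(n)$-fraction) of instances over $\mathbb{Z}_p$ --- call these the ``good'' primes. Since the interval has more than $n^c$ primes and at least half are good, we have more than $n^c$ good primes. Re-absorbing the loss, each good prime still satisfies an $1/n^{\alpha'}$-correctness bound for a fixed $\alpha' = \alpha' (\alpha)$, and $\beta$ should be chosen with $\alpha'$ in place of $\alpha$ in the bound $p > 2dn^{2\max\{\alpha',c\}}$.

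Now for each good prime $p$ --- and we can test goodness, or simply run the procedure on every prime in the interval and discard failures --- Lemma~\ref{lemma:7} applies: in $\mathrm{poly}(n)$ time we recover $f^\prime_{L,p}\!\left(x, (1)_{i=1}^{n^c}, (0)_{i=1}^{n^c}\right)$ with error probability at most $1/2^{q(n)}$. By Equation~\eqref{eq:3} with $a = (1)^{n^c}$, $b = (0)^{n^c}$, this value is exactly $f_{L,p}(x) = f_L(x) \bmod p$, i.e.\ the number of accepting certificates for $x \in L$ reduced mod $p$. Having obtained these residues for more than $n^c$ distinct primes, each exceeding $2$, their product exceeds $2^{n^c}$, which bounds the true certificate count $f_L(x) \le 2^{n^c}$; so Lemma~\ref{lemma:3} reconstructs $f_L(x) = f^\prime_L\!\left(x, (1)_{i=1}^{n^c}, (0)_{i=1}^{n^c}\right)$ over $\mathbb{Z}$ in polynomial time. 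A final union bound over the $\mathrm{poly}(n)$ primes keeps the total error probability exponentially small, hence certainly ``with high probability.''

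The main obstacle is the second step --- the averaging/pigeonhole argument ensuring enough good primes. One has to be careful that the near-equality of the $|\mathbb{Z}_p|$-sizes across the interval is quantitatively strong enough, that the interval genuinely contains more than $n^c$ primes (this is where Lemma~\ref{lemma:4} is essential and where $\beta$ must be taken sufficiently large relative to $c$), and that the constant-factor degradation in the per-prime correctness fraction is still an $1/\mathrm{poly}(n)$-fraction so that Lemma~\ref{lemma:7} remains applicable with the chosen $\beta$. Everything else is bookkeeping: propagating the polynomial bound on $d$ into the choice of $\beta$, and assembling the union bounds. I would also remark that we do not need to \emph{identify} the good primes in advance --- running the OSP-based recovery of Lemma~\ref{lemma:7} on all primes in the interval and keeping those that pass verification suffices, since the verification itself is sound with exponentially high probability.
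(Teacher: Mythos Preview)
Your approach matches the paper's almost exactly: fix $\beta$ large, count primes in the interval via Lemma~\ref{lemma:4}, average to find many primes with sufficient per-prime correctness, apply Lemma~\ref{lemma:7} to each, then CRT and a union bound. The one quantitative slip is the ``at least half'' claim in the averaging step. From $\mathbb{E}[X(p)] \ge \Theta(1/n^\alpha)$ (where $X(p)$ is the correctness fraction over $\mathbb{Z}_p$-instances), a direct pigeonhole does \emph{not} give that half the primes satisfy $X(p) > 1/\mathrm{poly}(n)$: nothing prevents the oracle from concentrating all of its correctness on an $O(1/n^\alpha)$-fraction of the primes, each with $X(p) = \Theta(1)$, while the remaining primes have $X(p) = 0$. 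The paper handles this with Markov's inequality applied to $Y(p) = 1 - X(p)$, concluding only that an $\Omega(1/n^\alpha)$-fraction of primes have $X(p) > 1/n^{2\alpha}$ --- not half, but still $\Omega(n^{0.474\beta}/n^\alpha) \gg n^c$ good primes once $\beta$ is taken large enough (the paper sets $\beta = 3 + 10^6(\alpha + c)$). You correctly flagged this step as the delicate one; once you replace ``half'' by ``an $\Omega(1/n^\alpha)$-fraction'' and absorb the extra $n^\alpha$ loss into the choice of $\beta$, your argument goes through exactly as the paper's does.
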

\begin{proof}
Let
\begin{equation}
\label{eq:9}
\beta = 3 + 10^6 (\alpha + c).
\end{equation}
Lemma \ref{lemma:4} implies that the largest prime gap in the interval $\left( n^\beta, n^\beta + \displaystyle \frac{n^\beta}{n + 2 n^c} \right)$ is of the order of
\begin{equation}
\label{eq:10}
\begin{split}
O \left( \left( n^\beta \left( 1 + \frac{1}{n + 2 n^c} \right) \right)^{0.525} \right) & = O \left( \left( 2 n^\beta \right)^{0.525} \right) \\
& = O \left( n^{0.525 \beta} \right).
\end{split}
\end{equation}
Using equations \eqref{eq:9} and \eqref{eq:10}, the number of primes in the interval $\left( n^\beta, n^\beta + \displaystyle \frac{n^\beta}{n + 2 n^c} \right)$ is given by
\begin{equation}
\label{eq:11}
\begin{split}
\pi \left( n^\beta, n^\beta + \frac{n^\beta}{n + 2 n^c} \right) & = \Omega \left( \frac{\displaystyle \frac{n^\beta}{n + 2 n^c}}{n^{0.525 \beta}} \right) \\
& = \Omega \left( \frac{n^{0.475 \beta}}{n + 2 n^c} \right) \\
& = \Omega \left( n^{0.474 \beta} \right).
\end{split}
\end{equation}

Now that we have proved that there are many primes, we must prove that a sufficient fraction of these primes have sufficient correctness. By sufficient correctness of a prime $p$, we mean that $O$ must compute $f^\prime_{L, p}$ correctly on more than a $1 / n^{2 \alpha}$-fraction of instances. There are $p^{n + 2 n^c}$ instances of $f^\prime_{L, p}$ satisfying the following inequality\footnote{$\left( 1 + \frac{1}{m} \right)^m$ is increasing and $\lim_{m \to \infty} \left( 1 + \frac{1}{m} \right)^m = e$.}:
\begin{equation}
\label{eq:12}
n^{\left( n + 2 n^c \right) \beta} < p^{n + 2 n^c} < n^{\left( n + 2 n^c \right) \beta} \left( 1 + \frac{1}{n + 2 n^c} \right)^{n + 2 n^c} < e n^{\left( n + 2 n^c \right) \beta}.
\end{equation}

Informally, the number of instances for each prime is balanced up to a constant factor. The following observation holds within this range for primes $p_1$ and $p_2$:
\begin{equation*}
p_1^{n + 2 n^c} < e p_2^{n + 2 n^c}.
\end{equation*}
From the above observations, a lower bound on the number of correct instances we must have over all primes must be
\begin{equation}
\label{eq:13}
\frac{n^{\left( n + 2 n^c \right) \beta} \pi \left( n^\beta, n^\beta + \displaystyle \frac{n^\beta}{n + 2 n^c} \right)}{n^\alpha}.
\end{equation}

Let the random variables $X: \left( n^\beta, n^\beta + \frac{n^\beta}{n + 2 n^c} \right)_p \to [0, 1]$ and $Y: \left( n^\beta, n^\beta + \frac{n^\beta}{n + 2 n^c} \right)_p \to [0, 1]$ be defined as
\begin{equation*}
X(p) = \text{ the fraction of correct answers for instances of } p,
\end{equation*}
and
\begin{equation*}
Y(p) = \text{ the fraction of incorrect answers for instances of } p,
\end{equation*}
respectively. Using Equations \eqref{eq:12} and \eqref{eq:13}, we have
\begin{equation}
\label{eq:14}
\begin{split}
\mathcal{E}[X] & = \sum_{p \in \left( n^\beta, n^\beta + \frac{n^\beta}{n + 2 n^c} \right)_p} \frac{X(p)}{\pi \left( n^\beta, n^\beta + \displaystyle \frac{n^\beta}{n+2n^c} \right)} \\
& \geq \sum_{p \in \left( n^\beta, n^\beta + \frac{n^\beta}{n + 2 n^c} \right)_p} \frac{p^{n + 2 n^c} X(p)}{e n^{\left( n + 2 n^c \right) \beta} \pi \left( n^\beta, n^\beta + \displaystyle \frac{n^\beta}{n+2n^c} \right)} \\
& \geq \frac{n^{\left( n + 2 n^c \right) \beta} \pi \left( n^\beta, n^\beta + \displaystyle \frac{n^\beta}{n+2n^c} \right)}{e n^\alpha \cdot n^{\left( n + 2 n^c \right) \beta} \pi \left( n^\beta, n^\beta + \displaystyle \frac{n^\beta}{n+2n^c} \right)} \\
& = \frac{1}{e n^\alpha}.
\end{split}
\end{equation}
Using Equation \eqref{eq:14} and linearity of expectation, we have
\begin{equation}
\label{eq:15}
\begin{split}
\mathcal{E}[Y] & = 1 - \mathcal{E}[X] \\
& \leq 1 - \frac{1}{e n^\alpha}.
\end{split}
\end{equation}
Now, suppose we pick a random prime uniformly from this restricted range of primes. Using Equation \eqref{eq:15} and Markov's inequality \citep{Mitzenmacher2005}, we have the following\footnote{Since $\displaystyle \frac{1}{1-x} < 1+2x$ for sufficiently small positive $x$, due to the Taylor series.}:
\begin{equation}
\label{eq:16}
\begin{split}
\mathcal{P} \left[ X(p) \leq \displaystyle \frac{1}{n^{2\alpha}} \right] & = \mathcal{P} \left[ Y(p) \geq 1 - \displaystyle \frac{1}{n^{2\alpha}} \right] \\
& \leq \frac{\mathcal{E}[Y]}{1 - \displaystyle \frac{1}{n^{2\alpha}}} \\
& \leq \frac{1 - \displaystyle \frac{1}{en^\alpha}}{1 - \displaystyle \frac{1}{n^{2\alpha}}} \\
& \leq \left( 1 - \frac{1}{en^\alpha} \right) \left( 1 + \frac{2}{n^{2\alpha}} \right) \\
& = 1 - \Omega \left( \frac{1}{n^\alpha} \right).
\end{split}
\end{equation}
Hence, from Equation \eqref{eq:16}, the probability that we have more than $1/n^{2\alpha}$-fraction of correctness for $p$ is given by
\begin{equation}
\mathcal{P} \left[ X(p) > \frac{1}{n^{2\alpha}} \right] = \Omega \left( \frac{1}{n^\alpha} \right).
\end{equation}
\end{proof}
From Equations \eqref{eq:9}, \eqref{eq:11}, and \eqref{eq:16}, the number of primes for which we have sufficient correctness is 
\begin{equation*}
\Omega \left( \frac{n^{0.474 \beta}}{n^\alpha} \right) = \Omega \left( n^{0.473 \beta} \right).
\end{equation*}

Due to Lemma \ref{lemma:7}, each prime with sufficient correctness correctly computes the answer to $f^\prime_{L, p}$ with exponentially low error probability, and each prime with insufficient correctness either gives us the answer by accident or rejects all machines with exponentially low error probability. We can union-bound all the errors to exponentially low probability since the union would cause a $\poly(n)$-fold increase of a decreasing exponential function.

Using the Chinese remainder theorem (Lemma \ref{lemma:3}), from all correct answers to $f^\prime_{L, p}$, we can reconstruct $f^\prime_L \left( x, (1)_{i = 1}^{n^c}, (0)_{i = 1}^{n^c} \right)$ over $\mathbb{Z}$. This occurs in $\poly(n)$ time and queries to $O$.

\section{Main Results} 
\label{section:7}
We are now ready to state the main theorem of this paper. Unless otherwise specified, $n$ is the size of the input string $x$ ($n = |x|$) to the language membership problem of $x \in L$.
\begin{theorem}
\label{theorem:1}
Given any $\NP$-complete language $L$, for any $\alpha > 0$, we have a function $f^{\prime\prime}_{L, \beta}$ such that given an oracle $O$ that computes $f^{\prime\prime}_{L, \beta}$ correctly on a $1/n^{\alpha}$-fraction of instances, for any polynomial $q:\mathbb{N}\to\mathbb{N}$, we have a probabilistic algorithm that decides whether $x \in L$ with error probability at most $1/2^{q(n)}$ in $\poly(n)$-time and $\poly(n)$-queries to $O$.
     
Moreover, we have a polynomial-time proof system in which a verifier $V$ can verify the validity of a claim of the form $s = f^{\prime\prime}_{L, \beta}(x, a, b, p)$ in $\poly(n)$-time using $\poly(n)$-queries to the prover, $P$, of the form $f^{\prime\prime}_{L, \beta}(y, a^\prime, b^\prime, p)$ for the same prime $p$. This protocol can be modified to work when the prover is only correct on a $1/n^{2\alpha}$-fraction of instances over the instances pertaining to the prime $p$.
\end{theorem}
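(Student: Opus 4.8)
The plan is to assemble Lemmas~\ref{lemma:5}--\ref{lemma:8} together with the Oracle Sumcheck Protocol of Algorithm~\ref{alg:1}. Fix the $\NP$-complete language $L$, whose certificates have length $n^c$, and for the given $\alpha > 0$ let $\beta = 3 + 10^6(\alpha + c)$ as in Lemma~\ref{lemma:8}; this is the value for which the function $f^{\prime\prime}_{L,\beta}$ of Section~\ref{section:6} is defined. For the first assertion, suppose the oracle $O$ computes $f^{\prime\prime}_{L,\beta}$ correctly on a $1/n^\alpha$-fraction of instances. Lemma~\ref{lemma:8} then produces, in $\poly(n)$ time and $\poly(n)$ queries to $O$ and with high probability, the integer $f^\prime_L\big(x,(1)_{i=1}^{n^c},(0)_{i=1}^{n^c}\big)$, which by construction equals the number of accepting certificates witnessing $x \in L$; so the algorithm outputs ``$x \in L$'' precisely when this integer is nonzero. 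Every error term in this pipeline --- from the STV list-decoder (Lemma~\ref{lemma:2}), from the sumcheck soundness inside Lemma~\ref{lemma:7}, and from the Markov and counting estimates of Lemma~\ref{lemma:8} --- is either exponentially small in $n$ or a $\poly(n)$-fold union of such terms, so one run already has error at most $1/2^{q^\prime(n)}$ for some polynomial $q^\prime$, and $O(q(n))$ independent copies with a majority vote bring the error below $1/2^{q(n)}$ for any prescribed polynomial $q$.

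For the proof system I would run Algorithm~\ref{alg:1} itself, letting the prover $P$ occupy the role of the machine queried as $M^O(\cdot,\cdot,\cdot,p)$, with the only change that $V$ seeds the protocol with $s_0 = s$, the claimed value, instead of querying it. The structural point is that every query Algorithm~\ref{alg:1} issues is an evaluation of $f^\prime_{L,p}$ at a point carrying the \emph{same} prime $p$ --- that is, exactly a query of the form $f^{\prime\prime}_{L,\beta}(y,a^\prime,b^\prime,p)$ --- because the self-reduction of Lemma~\ref{lemma:6} never changes $p$, while all remaining work (interpolating $h_i$, the degree check, the test $h_i(0)+h_i(1)=s_{i-1}$, and the final direct evaluation of $g^\prime_{C_L,p}$ via the circuit $C_L$) is done by $V$ alone in $\poly(n)$ time. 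Completeness is immediate: if $s = f^\prime_{L,p}(x,a,b)$, an honest $P$ answers every query truthfully and $V$ accepts with probability $1$. Soundness repeats the induction in the proof of Lemma~\ref{lemma:7} verbatim: if $s \neq f^\prime_{L,p}(x,a,b)$, then by the Schwartz--Zippel Lemma (Lemma~\ref{lemma:1}) each of the $n^c$ rounds is survived with probability at most $d/p$, where $d = \deg(f^\prime_{L,p}) = \poly(n)$, so $V$ is fooled with probability at most $dn^c/p = o(1)$, since $p \sim n^\beta$ with $\beta$ vastly larger than any exponent coming from $d$ and $c$; $\poly(n)$ sequential repetitions then push this below $1/2^{q(n)}$.

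To cover a prover correct only on a $1/n^{2\alpha}$-fraction of the instances attached to $p$, I would insert the STV list-decoder (Lemma~\ref{lemma:2}) exactly as Lemma~\ref{lemma:7} does: because $1/n^{2\alpha} > \sqrt{2d/p}$ (again since $\beta$ is enormous), the list-decoder outputs, in $\poly(n)$ time and queries to $P$, a list of $O(n^{2\alpha})$ randomized machines, each computing some degree-$d$ polynomial on $\mathbb{Z}_p^{n+2n^c}$, at least one of which computes $f^\prime_{L,p}$ with exponentially small error. The verifier then runs the seeded Algorithm~\ref{alg:1} on every machine in the list: if $s$ is correct, the machine for $f^\prime_{L,p}$ makes it accept, whereas if $s$ is false the soundness bound above forces a rejection on every machine, so $V$ accepts iff at least one run does. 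Union-bounding over the $O(n^{2\alpha})$ machines keeps the total error exponentially small, and the same amplification as before yields $1/2^{q(n)}$.

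The step I expect to be the main obstacle is bookkeeping rather than a new idea: one must check that every numeric condition on $p$ invoked above --- $p > 2dn^{2\max\{\,2\alpha, c\,\}}$ for Lemma~\ref{lemma:7} and the decoding threshold $p > 2dn^{4\alpha}$ --- really follows from $p > n^\beta$ with $\beta = 3 + 10^6(\alpha + c)$ and $d$ polynomial in $n$, and that the ``same prime $p$'' restriction on $P$'s queries is honoured literally by Algorithm~\ref{alg:1} (it is, by Lemma~\ref{lemma:6}). A secondary point worth spelling out is why the amplification to an \emph{arbitrary} polynomial $q$ is sound even though Lemma~\ref{lemma:2} only guarantees error $1/2^{q_1(n)}$ for \emph{some} polynomial $q_1$: independent repetition with a majority vote turns error $2^{-q_1(n)}$ into $2^{-\Omega(t\,q_1(n))}$ over $t$ trials, and taking $t = O(q(n))$ makes the exponent dominate any target $q(n)$.
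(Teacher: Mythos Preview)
Your proposal is correct and follows essentially the same route as the paper: assemble Lemmas~\ref{lemma:7} and~\ref{lemma:8} to recover $f'_L\bigl(x,(1)_{i=1}^{n^c},(0)_{i=1}^{n^c}\bigr)$ via the Chinese remainder theorem, and invoke Algorithm~\ref{alg:1} (with the STV list-decoder front end when needed) for the interactive proof. Your write-up is in fact more explicit than the paper's own proof on two points---the amplification from some polynomial $q'$ to an arbitrary target $q$, and the numeric check that $p>n^\beta$ with $\beta=3+10^6(\alpha+c)$ dominates every threshold invoked---so nothing is missing.
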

\begin{proof}
 We construct the polynomials $f^{\prime}_{L, p}$ as discussed in Section \ref{section:4}. Given the oracle $O$, we make the necessary queries ($f^{\prime\prime}_{L, \beta}(x, (1)_{i = 1}^{n^c}, (0)_{i = 1}^{n^c}, p)$) for each prime $p$ in the range, and then certify these results as shown in Lemma \ref{lemma:7}. Due to Lemmas \ref{lemma:7} and \ref{lemma:8}, with exponentially small error probability, sufficiently many primes compute correct answers, are all certified, and no wrong answers are certified. This fact simply follows that the error probability of each of these events is exponentially small, and given we have polynomially many events, the union bound still gives us an exponentially small probability of error. Using the Chinese remainder theorem, as stated in Lemma \ref{lemma:3}, we can use all the certified answers to compute the number of certificates of the fact that $x \in L$, $f^{\prime}_L(x, (1)_{i = 1}^{n^c}, (0)_{i = 1}^{n^c})$, over $\mathbb{Z}$.
    
The interactive proof is due to Lemma \ref{lemma:7}. Note that if the prover is always correct or is capable of being always correct, no error correction is required at the verifier's end. However, if the prover is correct on only a $1/n^{2\alpha}$-fraction (given our choice of $\beta = 3 + 10^6(\alpha + c)$ from Equation \eqref{eq:9}) of instances over the prime $p$, we can use the STV list decoder for the proof, on the verifier's end.
\end{proof}
We have the following immediate corollaries of this theorem.
\begin{corollary}
\label{corollary:1}
If $\NP \not\subset \BPP$, then for all $\alpha > 0$, for each $\NP$-complete language $L$, we have a polynomial-time provable function $f^{\prime\prime}_{L, \beta}$ such that no polynomial-time randomized algorithm with error probability less than $1/3$ on correct instances can compute $f^{\prime\prime}_{L, \beta}$ correctly on more than a $1/n^{\alpha}$-fraction of instances.
\end{corollary}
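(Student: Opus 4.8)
The plan is to derive Corollary~\ref{corollary:1} as a direct contrapositive consequence of Theorem~\ref{theorem:1}. Suppose, for contradiction, that for some $\alpha > 0$ and some $\NP$-complete language $L$, there is a polynomial-time randomized algorithm $O$ that computes $f^{\prime\prime}_{L, \beta}$ correctly on more than a $1/n^{\alpha}$-fraction of instances (and has error probability less than $1/3$ whenever it does claim correctness — but for the purpose of the worst-case to rare-case reduction we only need that the fraction of instances on which it outputs the right value with high probability is $\Omega(1/n^\alpha)$; the probability can be boosted to $1/2^{q(n)}$ per query by standard majority amplification over $\poly(n)$ independent runs before feeding it into the machinery of Lemma~\ref{lemma:7}). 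Then $O$ can play the role of the oracle in Theorem~\ref{theorem:1}.

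First I would invoke Theorem~\ref{theorem:1} with this $O$: it yields a probabilistic algorithm $A$ that decides $x \in L$ with error probability at most $1/2^{q(n)}$ in $\poly(n)$-time and $\poly(n)$ queries to $O$. Since each query to $O$ is answered by a $\poly(n)$-time randomized computation, and there are only $\poly(n)$ queries, the total running time of $A$ is $\poly(n)$. Hence $A$ is a polynomial-time randomized algorithm deciding $L$ with error probability well below $1/3$, so $L \in \BPP$. Because $L$ is $\NP$-complete and $\BPP$ is closed under polynomial-time (many-one, indeed Turing) reductions, this gives $\NP \subseteq \BPP$, contradicting the hypothesis $\NP \not\subset \BPP$. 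Therefore no such $O$ exists, which is exactly the statement of the corollary. The ``polynomial-time provable'' clause of the corollary is inherited verbatim from the second paragraph of Theorem~\ref{theorem:1}, so nothing extra needs to be argued there.

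The only point requiring a little care — and the step I expect to be the main (minor) obstacle — is reconciling the two correctness hypotheses: Corollary~\ref{corollary:1} speaks of an algorithm with ``error probability less than $1/3$ on correct instances'' computing $f^{\prime\prime}_{L,\beta}$ on a $1/n^\alpha$-fraction, whereas Lemma~\ref{lemma:7} (and hence Theorem~\ref{theorem:1}) wants an oracle that is essentially deterministically correct on such a fraction. I would handle this by first amplifying: run $O$ some $\poly(n)$ times on each query and take the plurality answer, so that on the $\Omega(1/n^\alpha)$-fraction of ``good'' instances the amplified oracle $O'$ returns the correct value except with probability $1/2^{r(n)}$ for a polynomial $r$ of our choosing; then note that the whole analysis of Lemma~\ref{lemma:7} and Lemma~\ref{lemma:8} is robust to replacing an exactly-correct oracle by one that errs with exponentially small probability per query, since all the union bounds there already absorb a $\poly(n)$ factor times a decreasing exponential. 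With that observation in place the reduction goes through and the contradiction is complete.
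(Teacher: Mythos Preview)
Your proposal is correct and follows essentially the same approach as the paper: amplify the randomized algorithm by repeating it $\poly(n)$ times and taking the majority answer so that on the $1/n^\alpha$-fraction of good instances it errs only with exponentially small probability, then feed this amplified algorithm as the oracle into Theorem~\ref{theorem:1} to obtain a $\BPP$ algorithm for the $\NP$-complete language $L$, contradicting $\NP \not\subset \BPP$. Your write-up is in fact more explicit than the paper's own proof about why the amplification suffices and how the union bounds absorb the residual error, but the underlying argument is identical.
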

\begin{proof}
For an oracle machine made by the STV list decoder, query the randomized algorithm $\poly(n)$ many times on the same input to have exponentially low error. We will take the value that makes up the majority of the answers if there is one. If there is no such value, we take the answer as $0$. Due to the union-bound over $\poly(n)$ ``queries'' with exponentially small errors on correct instances, the probability that there is even a single mismatch between the answers on correct instances and the values we take down is exponentially small.

If such a randomized algorithm existed, we would have a polynomial-time randomized algorithm for $L$ with exponentially small error bounds on both sides. Due to the $\NP$-completeness of $L$, we would have that $\NP \subset \BPP$.
\end{proof}
\begin{corollary}
\label{corollary:2}
If $\NP \not\subset \PPOLY$, then for all $\alpha > 0$, for each $\NP$-complete language $L$, we have a polynomial-time provable function $f^{\prime\prime}_{L, \beta}$ such that, for all $k > 0$, and all circuit families $\{\, C_n \,\}_{n \in \mathbb{N}}$ computing values from the domain $\mathcal{D}$ of $f^{\prime\prime}_{L, \beta}$ to the range of $f^{\prime\prime}_{L, \beta}$, $C_n$ is of size at most $n^k$ (where $n$ = $|x|$), for sufficiently large $n$, we have
\begin{equation*}
\mathcal{P}_{x \leftarrow_r \mathcal{D}}\left[ f^{\prime\prime}_{L, \beta}(x) = C_n(x) \right] < \frac{1}{n^{\alpha}}.
\end{equation*}
\end{corollary}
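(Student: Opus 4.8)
The plan is to mirror the proof of Corollary \ref{corollary:1}, replacing the appeal to $\NP \not\subset \BPP$ with an appeal to $\NP \not\subset \PPOLY$, and handling the randomization inherent in Theorem \ref{theorem:1} by a standard nonuniform derandomization (Adleman-style fixing of the random bits). First I would assume, for contradiction, that there is some $\alpha > 0$, some $\NP$-complete language $L$, some $k > 0$, and a circuit family $\{\, C_n \,\}_{n \in \mathbb{N}}$ with $|C_n| \leq n^k$ such that $\mathcal{P}_{x \leftarrow_r \mathcal{D}}[f^{\prime\prime}_{L, \beta}(x) = C_n(x)] \geq 1/n^{\alpha}$ for infinitely many $n$. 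The goal is to build a polynomial-size circuit family deciding $L$, contradicting $\NP \not\subset \PPOLY$.

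The key steps, in order: (1) View each $C_n$ as a (deterministic) oracle for $f^{\prime\prime}_{L, \beta}$ that is correct on at least a $1/n^{\alpha}$-fraction of instances; since the machines produced by the STV list-decoder in Lemma \ref{lemma:2} and the algorithm of Theorem \ref{theorem:1} only ever query the oracle and run in $\poly(n)$ time, and $C_n$ has size $n^k$, the entire procedure of Theorem \ref{theorem:1} can be unrolled into a single probabilistic circuit of size $\poly(n)$ with oracle gates replaced by copies of $C_n$. (2) By Theorem \ref{theorem:1}, this probabilistic circuit decides $x \in L$ with error probability at most $1/2^{q(n)}$; choosing $q(n) = n + 1$, a union bound over all $2^n$ inputs of length $n$ shows that some fixing of the internal random string makes the circuit correct on \emph{all} inputs of length $n$ (the standard averaging argument behind $\BPP \subset \PPOLY$). (3) Hardwire that good random string to obtain a deterministic circuit $D_n$ of size $\poly(n)$ that decides $L$ on all length-$n$ inputs. (4) Since this works for infinitely many $n$, and for the remaining lengths one may pad or use a trivial circuit, one obtains $\{\, D_n \,\}_{n \in \mathbb{N}} \in \PPOLY$ deciding $L$; by $\NP$-completeness of $L$ this gives $\NP \subset \PPOLY$, a contradiction.

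One technical point that needs care, exactly as in Corollary \ref{corollary:1}: the oracle $C_n$ is only guaranteed correct on a $1/n^{\alpha}$-fraction, whereas Theorem \ref{theorem:1} as stated requires an oracle for $f^{\prime\prime}_{L, \beta}$ that is correct on a $1/n^{\alpha}$-fraction of instances — so these match directly, and unlike the randomized case we do not even need a majority-vote amplification step, because $C_n$ is deterministic and already produces a fixed value on each input. The main obstacle I expect is bookkeeping the input-size parameter: $f^{\prime\prime}_{L, \beta}$ takes inputs over $\cup_p \mathbb{Z}_p^{n + 2n^c} \times \{\, p \,\}$ whose bit-length is some $m = \Theta(n^{c+1}\log n)$, not $n$ itself, so one must check that "$1/n^{\alpha}$-fraction of instances of $f^{\prime\prime}_{L, \beta}$" and "circuit of size $n^k$ in the parameter $n = |x|$" are the right normalizations and that the $\poly(n)$ running time of Theorem \ref{theorem:1}'s algorithm indeed translates to a $\poly(n)$-size circuit after substituting $C_n$ for the oracle; since every quantity in Lemmas \ref{lemma:7} and \ref{lemma:8} (field size, degree, number of primes, number of queries) is polynomial in $n$, this goes through, but it is the place where the argument must be stated precisely.
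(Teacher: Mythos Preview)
Your proposal is correct and takes essentially the same approach as the paper's proof: plug the circuit family $\{C_n\}$ into the probabilistic reduction of Theorem~\ref{theorem:1} as the oracle, then apply Adleman's averaging argument to hardwire a good random string and obtain a deterministic polynomial-size circuit deciding $L$, contradicting $\NP \not\subset \PPOLY$. Your write-up is in fact considerably more detailed than the paper's two-line proof, which simply cites \cite{Adleman1978} for the hard-coding step and defers the rest to the argument of Corollary~\ref{corollary:1}.
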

\begin{proof}
Due to ideas similar to the theorem of \cite{Adleman1978}\footnote{By simply hard-coding a successful reduction string that works for all $x$ with $|x|=n$.}, our probabilistic reduction extends to the case of circuits. The remainder of the proof proceeds similar to that of Corollary \ref{corollary:1}.
\end{proof}
We now state the $\P^{\SHARPP}$ versions of these corollaries, which rely on weaker conjectures. The proofs proceed identically to their $\NP$ counterparts.
\begin{corollary}
\label{corollary:3}
If $\P^{\SHARPP} \not\subset \BPP$, then for all $\alpha > 0$, for each $\NP$-complete language $L$ that has parsimonious reductions from every language in $\NP$, we have a polynomial-time provable function $f^{\prime\prime}_{L, \beta}$ such that no polynomial-time randomized algorithm with error probability less than $1/3$ on correct instances can compute $f^{\prime\prime}_{L, \beta}$ correctly on more than a $1/n^{\alpha}$-fraction of instances.
\end{corollary}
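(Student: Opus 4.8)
The plan is to mirror the proof of Corollary \ref{corollary:1}, with the $\NP$-completeness of $L$ replaced by the $\SHARPP$-completeness of its associated counting problem. First I would observe that because $L$ is $\NP$-complete and admits parsimonious reductions from every language in $\NP$ — in particular from $\SAT$ — the counting function $\#L$ sending $x$ to its number of accepting certificates is $\SHARPP$-complete: a parsimonious reduction preserves the number of witnesses, so $\#\SAT$ (which is $\SHARPP$-complete) reduces to $\#L$, while $\#L \in \SHARPP$ is immediate since $L \in \NP$. The key observation is that $\#L(x)$ is exactly $f^\prime_L\!\left(x, (1)_{i=1}^{n^c}, (0)_{i=1}^{n^c}\right)$, the quantity that the machinery behind Theorem \ref{theorem:1} reconstructs over $\mathbb{Z}$ from access to an oracle for $f^{\prime\prime}_{L,\beta}$.

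Next, suppose toward a contradiction that for some $\alpha > 0$ there is a polynomial-time randomized algorithm $A$, with error probability less than $1/3$ on the instances it computes correctly, that computes $f^{\prime\prime}_{L,\beta}$ on more than a $1/n^\alpha$-fraction of inputs. Exactly as in Corollary \ref{corollary:1}, I would first amplify: on any fixed input, run $A$ a polynomial number of times and output the majority answer (and $0$ if there is no majority). By a union bound over the polynomially many repetitions, the resulting procedure agrees with the true value of $f^{\prime\prime}_{L,\beta}$ on every correct instance except with exponentially small probability, so it serves as an oracle $O$ that is correct on more than a $1/n^\alpha$-fraction of instances. Feeding $O$ into the pipeline of Lemmas \ref{lemma:7} and \ref{lemma:8} together with the Chinese remainder theorem (Lemma \ref{lemma:3}), as in the proof of Theorem \ref{theorem:1}, recovers $f^\prime_L\!\left(x, (1)_{i=1}^{n^c}, (0)_{i=1}^{n^c}\right) = \#L(x)$ exactly, in $\poly(n)$-time and with exponentially small error.

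This yields a polynomial-time randomized algorithm computing the $\SHARPP$-complete function $\#L$ with error well below $1/3$. To conclude $\P^{\SHARPP} \subset \BPP$, I would take any language in $\P^{\SHARPP}$, simulate its polynomial-time machine, and answer each of its polynomially many $\SHARPP$-oracle queries by reducing the queried function to $\#L$ and invoking the above algorithm; a union bound over the queries keeps the total error exponentially small, so the language lies in $\BPP$. This contradicts the hypothesis $\P^{\SHARPP} \not\subset \BPP$, so no such $A$ exists, which is the claim. The ``polynomial-time provable'' clause is inherited verbatim from the proof system of Theorem \ref{theorem:1} and needs no new argument.

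The step I expect to require the most care is the error bookkeeping above: one must check that the guarantee on $A$ — which bounds error only on the instances it answers correctly — still suffices to drive the list-decoding-plus-sumcheck pipeline of Lemma \ref{lemma:7}, which it does, since that lemma only needs an oracle agreeing with $f^\prime_{L,p}$ on a sufficient fraction and majority-amplification does not shrink the agreeing set; and one must confirm that composing through a $\SHARPP$-oracle machine with polynomially many queries still produces a bona fide $\BPP$ algorithm. Neither point introduces a genuinely new obstacle beyond what already appears in Corollaries \ref{corollary:1} and \ref{corollary:2}; the only substantive new ingredient is the $\SHARPP$-completeness of $\#L$ furnished by the parsimonious-reduction hypothesis on $L$.
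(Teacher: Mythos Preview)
Your proposal is correct and follows essentially the same approach as the paper, which simply notes that the proof proceeds identically to that of Corollary~\ref{corollary:1} with the $\NP$-completeness of $L$ replaced by the $\SHARPP$-completeness of $\#L$ (furnished by the parsimonious-reduction hypothesis). Your additional step of simulating a $\P^{\SHARPP}$ machine query-by-query is the natural way to spell out the conclusion $\P^{\SHARPP} \subset \BPP$, and your error bookkeeping is sound.
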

\begin{corollary}
\label{corollary:4}
If $\P^{\SHARPP} \not\subset \PPOLY$, then for all $\alpha > 0$, for each $\NP$-complete problem $L$ that has parsimonious reductions from every language in $\NP$, we have a polynomial-time provable function $f^{\prime\prime}_{L, \beta}$ such that, for all $k > 0$, and all circuit families $\{\, C_n \,\}_{n \in \mathbb{N}}$ computing values from the domain $\mathcal{D}$ of $f^{\prime\prime}_{L, \beta}$ to the range of $f^{\prime\prime}_{L, \beta}$, $C_n$ is of size at most $n^k$ (where $n$ = $|x|$), for sufficiently large $n$, we have
\begin{equation*}
\mathcal{P}_{x \leftarrow_r \mathcal{D}} \left[ f^{\prime\prime}_{L, \beta}(x) = C_n(x) \right] < \frac{1}{n^{\alpha}}.
\end{equation*}
\end{corollary}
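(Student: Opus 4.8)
The plan is to mirror the proof of Corollary~\ref{corollary:2}, with the single change that parsimony upgrades ``decides $L$'' to ``computes an arbitrary $\#P$ function.'' I argue by contraposition. Assume that for some $\alpha>0$, some $\NP$-complete $L$ admitting parsimonious reductions from every language in $\NP$, some $k>0$, and some circuit family $\{C_n\}_{n\in\mathbb{N}}$ with $|C_n|\le n^k$, we have $\mathcal{P}_{x\leftarrow_r\mathcal{D}}\!\left[f^{\prime\prime}_{L,\beta}(x)=C_n(x)\right]\ge 1/n^{\alpha}$ for infinitely many $n$; I will deduce $\P^{\SHARPP}\subset\PPOLY$.

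First, treat $\{C_n\}$ as the oracle $O$ of Theorem~\ref{theorem:1}: on the relevant input length it computes $f^{\prime\prime}_{L,\beta}$ on at least a $1/n^{\alpha}$-fraction of $\mathcal{D}$, which is exactly the hypothesis needed for the reconstruction machinery of Lemmas~\ref{lemma:7} and~\ref{lemma:8}. That machinery (the STV list-decoder of Lemma~\ref{lemma:2}, the oracle sumcheck protocol, and Chinese remaindering via Lemma~\ref{lemma:3}) yields a randomized procedure that, with error at most $1/2^{q(n)}$ for any desired polynomial $q$, outputs $f^\prime_L\!\left(x,(1)_{i=1}^{n^c},(0)_{i=1}^{n^c}\right)$ over $\mathbb{Z}$, i.e. the exact number of accepting certificates of $x\in L$. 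Because $L$ has parsimonious reductions from every $\NP$ language, after a polynomial-time transformation of the instance this count equals the number of accepting certificates for any chosen $L'\in\NP$, so the same procedure computes an arbitrary $\#P$ function with exponentially small error, given $O$.

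Next, derandomize in the style of \cite{Adleman1978}, exactly as in Corollary~\ref{corollary:2}: choose $q$ so that $1/2^{q(n)}<2^{-(n+1)}$ and union-bound over all $\le 2^n$ inputs of length $n$; some single setting of the internal randomness makes the procedure correct on every length-$n$ input at once, and hard-wiring that string together with the circuits $C_n$ produces a polynomial-size circuit family computing the chosen $\#P$ function exactly. Finally, any $L''\in\P^{\SHARPP}$ is decided by a polynomial-time machine making polynomially many queries to a fixed $\#P$ oracle; replacing each query by the polynomial-size circuit just built gives a polynomial-size circuit family for $L''$. Hence $\P^{\SHARPP}\subset\PPOLY$, contradicting the hypothesis. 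The polynomial-time proof system asserted in the statement is inherited verbatim from the second half of Theorem~\ref{theorem:1}.

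The step I expect to be the main obstacle is the nonuniform bookkeeping (as in the $\NP$ corollaries): checking that the reduction's running time, the number and sizes of the list-decoded machines, the prime range from Section~\ref{section:6}, and the $2^n$-way union bound all remain polynomial, so the hard-wired object is genuinely of size $n^{O(1)}$; and checking that fixing the randomness is compatible with the correctness guarantee of the sumcheck verification, i.e. that the good random string simultaneously forces acceptance of the true machine and rejection of every spurious machine on all relevant inputs. Everything else is a routine transfer of the argument of Corollary~\ref{corollary:2}.
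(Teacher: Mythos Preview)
Your proposal is correct and follows essentially the same approach as the paper: the paper does not spell out a separate proof for Corollary~\ref{corollary:4} but simply notes that it proceeds identically to its $\NP$ counterpart (Corollary~\ref{corollary:2}), and your write-up is precisely that argument with the one additional observation that parsimony lets the reconstructed certificate count compute an arbitrary $\SHARPP$ function. The Adleman-style hard-wiring of a good random string and the appeal to Theorem~\ref{theorem:1}'s reconstruction machinery are exactly what the paper intends.
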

Now, we will state corollaries depending on certain hypotheses stated in Section \ref{section:3}. Depending on one's faith in these hypotheses, one can see these results as either a very strong rare-case hardness result or a potential weakness of the hypothesis. Agnostically, we state the following corollaries.
\begin{corollary}
\label{corollary:5}
If RETH is true, then there is an $\epsilon > 0$ such that for all $\alpha > 0$, for each $\NP$-complete language $L$, we have a polynomial-time provable function $f^{\prime\prime}_{L, \beta}$ such that any randomized algorithm with error probability less than $1/3$ on correct instances computing $f^{\prime\prime}_{L, \beta}$ correctly on more than a $1/n^{\alpha}$-fraction of instances requires $2^{n^{\epsilon}}$ time.
\end{corollary}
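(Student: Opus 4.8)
The plan is to argue by contradiction, reusing the worst-case to rare-case reduction of Theorem~\ref{theorem:1} exactly as in Corollary~\ref{corollary:1}, but bookkeeping running times rather than merely polynomial-time membership. Assume RETH and let $\epsilon_0 > 0$ be the constant it supplies, so that no probabilistic algorithm decides $3\SAT$ on $m$ variables with correctness probability $> 2/3$ in time $2^{\epsilon_0 m}$. Fix an $\NP$-complete language $L$ with certificate-length exponent $c$, fix $\alpha > 0$ (hence $\beta = 3 + 10^6(\alpha + c)$ and the function $f^{\prime\prime}_{L,\beta}$), and suppose towards a contradiction that some randomized algorithm $A$ with error probability $< 1/3$ on its correct instances computes $f^{\prime\prime}_{L,\beta}$ on more than a $1/n^{\alpha}$-fraction of instances in time $t(n) \le 2^{n^{\epsilon}}$ for all large $n$, where $\epsilon > 0$ is a constant to be pinned down at the end.

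First I would run the majority-vote amplification from the proof of Corollary~\ref{corollary:1} on $A$, at a $\poly(n)$ multiplicative cost in time, to obtain an oracle that is correct with exponentially small error on every instance where $A$ was correct; feeding this into the STV-list-decoder / OSP / Chinese-remaindering pipeline of Sections~\ref{section:4}--\ref{section:7} (i.e.\ Theorem~\ref{theorem:1}) gives a probabilistic algorithm $B$ that decides $x \in L$ with exponentially small error using $\poly(n)$ time and $\poly(n)$ queries to $A$, each query an instance of $f^{\prime\prime}_{L,\beta}$. The one computation I would do carefully here is the size of those queried instances: a query is a tuple $(y, a^\prime, b^\prime, p)$ with $y \in \mathbb{Z}_p^n$, $a^\prime, b^\prime \in \mathbb{Z}_p^{n^c}$, and $p = \Theta(n^{\beta})$, hence of bit-length $\Theta((n + 2n^c)\log p) = \Theta(\beta\, n^c \log n) \le n^{c+1}$ for large $n$. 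Crucially the polynomial exponent $c+1$ is dictated by the certificate length of $L$ alone; $\alpha$ enters only through $\beta$, which inflates this size by a constant factor inside the logarithm. Therefore $B$ runs in time $\poly(n)\cdot t(n^{c+1}) \le \poly(n)\cdot 2^{n^{(c+1)\epsilon}}$.

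Next I would compose with a Karp reduction $3\SAT \le_p L$: an $m$-variable $3\SAT$ instance becomes an $L$-instance of size $n \le m^{d}$ for a constant $d = d(L)$, so running $B$ on it decides $3\SAT$ on $m$ variables, with error $< 1/3$, in time $\poly(m)\cdot 2^{m^{(c+1)d\,\epsilon}}$. Now I fix $\epsilon$ to be any positive constant with $(c+1)d\,\epsilon < 1$, e.g.\ $\epsilon = \frac{1}{2(c+1)d}$; then the exponent is $m^{1-\delta}$ for some $\delta > 0$, so the running time is $\poly(m)\cdot 2^{m^{1-\delta}} = 2^{o(m)}$, which is below $2^{\epsilon_0 m}$ for all sufficiently large $m$. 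This contradicts RETH, so no such $A$ exists, which is exactly the asserted $2^{n^{\epsilon}}$ lower bound (stated in whichever of the polynomially-related parameters one prefers --- $|x|$ or the bit-length of the argument of $f^{\prime\prime}_{L,\beta}$ --- after rescaling $\epsilon$ by that polynomial; note also that $\epsilon$ is determined by $c$ and by the exponent $d$ of the chosen $3\SAT \le_p L$ reduction).

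I expect the only genuine subtlety, and the step I would double-check, to be the claim that $\epsilon$ can be taken independent of $\alpha$: this rests on the fact that the number of $\mathbb{Z}_p$-coordinates in a query, $n + 2n^c + 1$, does not depend on $\alpha$, while the $\alpha$-dependent quantity $\beta$ only enlarges the per-coordinate bit-length logarithmically. All the other $\alpha$-dependent blow-ups --- the $O(n^{2\alpha})$ candidate machines returned by the STV list-decoder, the degree of the $\poly(n)$ overhead of OSP, the number of primes and queries --- contribute merely a $\poly(m)$ prefactor to the $3\SAT$ running time, which is swallowed by the $2^{o(m)}$ term and is irrelevant once we compare against $2^{\epsilon_0 m}$. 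Everything else is the routine composition of the reductions already established in the excerpt.
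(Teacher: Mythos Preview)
Your proposal is correct and follows essentially the same contrapositive route as the paper: assume a sub-$2^{n^{\epsilon}}$ algorithm for $f^{\prime\prime}_{L,\beta}$, plug it into the worst-case to rare-case reduction of Theorem~\ref{theorem:1}, compose with the Karp reduction $3\SAT \le_p L$, and obtain a $2^{o(m)}$ randomized algorithm for $3\SAT$ contradicting RETH. The paper's proof is considerably terser---it collapses the whole pipeline into a single exponent $a$ and does not separately verify that $\epsilon$ can be chosen independently of $\alpha$---whereas you carry out the size bookkeeping on the queried instances and isolate the $\alpha$-dependence to the logarithmic factor $\beta$, which is a point the paper leaves implicit.
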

\begin{proof}
Assume for the sake of contradiction that no such $\epsilon > 0$ exists. That is, for every $\epsilon' > 0$, there is a $2^{n^{\epsilon^\prime}}$-time randomized algorithm accomplishing this task. Notice that the reduction from $3\SAT$ to $L$ turns an instance of size $n$ to an instance of size $n^a$. Due to this, we will have a $2^{n^{a\epsilon^\prime}}\textit{poly}(n)$-time algorithm for $3\SAT$ for all $\epsilon^\prime > 0$. When $\epsilon^\prime < 1 / a$, this violates the RETH since $2^{n^{\epsilon' a}} = o(2^{\epsilon_0 n})$ for all $\epsilon_0 > 0$.
\end{proof}
\begin{corollary}
\label{corollary:6}
If RSETH is true, then for every $\epsilon > 0$ and for each $\alpha > 0$, there is a $k \in \mathbb{N}$, such that the $f^{\prime\prime}_{L, \beta}$ derived from $k\SAT$ is not computable in $2^{(1-\epsilon)n}$ time even on a $1/n^{\alpha}$-fraction of instances.
\end{corollary}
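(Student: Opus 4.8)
The plan is to argue by contradiction and to compose the worst-case-to-rare-case reduction of Theorem~\ref{theorem:1} with a hypothetical fast algorithm for $f^{\prime\prime}_{L,\beta}$, in exactly the same spirit as the proof of Corollary~\ref{corollary:5} for RETH. The one genuinely new point is that, this being a SETH-type statement, one must track the constant in the exponent, so the $\poly(n)$ overhead of the reduction has to be absorbed by passing from $\epsilon$ to, say, $\epsilon/2$. Throughout the argument $n$ will denote the number of variables of the underlying $k\SAT$ instance, and the instances of $f^{\prime\prime}_{L,\beta}$ derived from $k\SAT$ will be indexed by this $n$.

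First, given $\epsilon > 0$ and $\alpha > 0$, I would invoke RSETH (Conjecture~\ref{conjecture:2}) with parameter $\epsilon/2$ to obtain a $k \in \mathbb{N}$ such that no probabilistic algorithm decides $k\SAT$ on $n$ variables with correctness probability larger than $2/3$ in time $2^{(1-\epsilon/2)n}$. Take $L = k\SAT$: for $k \geq 3$ this is $\NP$-complete and has the constant-depth polynomial-size verification circuit needed in Section~\ref{section:4} (the certificate is a truth assignment, of length equal to the number of variables, so $c = 1$), and let $f^{\prime\prime}_{L,\beta}$ be the function that Theorem~\ref{theorem:1} produces from $L$. Since the formula length is polynomial in $n$ (or, if one prefers $|x|$ itself linear, after sparsification of the $k\SAT$ instances, which is harmless for RSETH), all the $\poly(\cdot)$ bounds of Theorem~\ref{theorem:1}, stated there in terms of $|x|$, are also $\poly(n)$.

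Now suppose toward a contradiction that some algorithm $A$ running in time $2^{(1-\epsilon)n}$ computes $f^{\prime\prime}_{L,\beta}$ correctly on a $1/n^{\alpha}$-fraction of its parameter-$n$ instances; in the randomized bounded-error version one first replaces $A$ by its $\poly(n)$-fold majority so that it is exponentially reliable on the instances it does answer, exactly as in the proof of Corollary~\ref{corollary:1}. I would then feed $A$ in as the oracle $O$ in the reduction of Theorem~\ref{theorem:1}, which is designed to tolerate precisely an oracle correct on only a $1/n^{\alpha}$-fraction (via the STV list-decoder of Lemma~\ref{lemma:2} and the oracle sumcheck protocol of Lemma~\ref{lemma:7}): it makes $\poly(n)$ queries to $O$, every query is again a parameter-$n$ instance (the underlying formula and certificate length stay fixed, and the prime $p = \Theta(n^{\beta})$ stays in its fixed $\poly(n)$-size range), and all remaining work --- list-decoding, the sumcheck over $\poly(n)$-size fields, and Chinese remaindering over $\poly(n)$ primes (Lemmas~\ref{lemma:7}, \ref{lemma:8}, \ref{lemma:3}) --- costs $\poly(n)$. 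Replacing each oracle call by a run of $A$ thus gives a probabilistic algorithm deciding $x \in k\SAT$ in time $\poly(n) \cdot 2^{(1-\epsilon)n} = 2^{(1-\epsilon)n + O(\log n)}$, which for all large $n$ is at most $2^{(1-\epsilon/2)n}$; by Theorem~\ref{theorem:1} its error probability is at most $1/2^{q(n)} < 1/3$. This contradicts the choice of $k$, finishing the proof.

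The step I expect to be the main obstacle is the bookkeeping of the input-size parameter. The exponent $n$ in ``$2^{(1-\epsilon)n}$'' must refer to a quantity linear in the number of $k\SAT$ variables, and not to the full bit-length of an $f^{\prime\prime}_{L,\beta}$ instance (which is a fixed polynomial in $n$ but not linear): otherwise the $\poly(n)$ blow-up coming out of the reduction would not fit inside the gap between $\epsilon$ and $\epsilon/2$, and, for dense formulas, a running-time lower bound measured against the bit-length would even be false, since brute force over the $n$ certificate bits already runs well within that budget. Parametrizing $f^{\prime\prime}_{L,\beta}$ by the variable count of the source formula (equivalently, restricting attention to linear-size $k\SAT$ via sparsification) resolves this cleanly. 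Once this is pinned down, everything else is the same union bound over polynomially many exponentially small error events that is already carried out inside Theorem~\ref{theorem:1} and Corollary~\ref{corollary:5}.
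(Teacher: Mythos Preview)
Your proposal is correct and follows essentially the same contrapositive argument as the paper: assume a $2^{(1-\epsilon)n}$-time rare-case algorithm for $f^{\prime\prime}_{L,\beta}$, compose it with the reduction of Theorem~\ref{theorem:1} to obtain a $2^{(1-\epsilon)n}\poly(n)$-time randomized algorithm for $k\SAT$, and contradict RSETH. Your version is more careful than the paper's two-line proof in explicitly absorbing the $\poly(n)$ overhead via the $\epsilon \mapsto \epsilon/2$ move and in flagging the parameter-$n$ bookkeeping (variable count versus bit-length), but the underlying idea is identical.
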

\begin{proof}
Assume that there is an $\epsilon > 0$ and an $\alpha > 0$ such that for all $k \in \mathbb{N}$, the $f^{\prime\prime}_{L, \beta}$ derived from $k\SAT$ is computable in $2^{(1-\epsilon)n}$-time on a $1/n^{\alpha}$-fraction of instances. Due to the reduction from $k\SAT$ to $f_{L, \beta}^{\prime \prime}$, we have a $2^{(1-\epsilon)n}\poly(n)$-time randomized algorithm for $k\SAT$ for all $k \in \mathbb{N}$, violating RSETH.
\end{proof}

\section{Other Techniques for Constructing Rare-Case Hard Functions} 
\label{section:8}

This section will overview other techniques for constructing rare-case hard functions. The first method uses \textit{Low-Degree Extensions (LDE)} of functions \citep{Sudan2001}. The second method proves that computing the permanent of random matrices is rare-case hard \citep{Feige1996}.

\subsection{Rare-Case Hard Functions Using Low-Degree Extensions}
\label{section:8.1}

Using the notation of \cite{Sudan2001}, let $P: \{\, 0, 1 \,\}^l \to \{\, 0, 1 \,\}$ be a function. We encode the function $P$ as a $2^l$-bit vector given by
\begin{equation*}
P = \left( P(i) \right)_{i = 0^l}^{1^l}.
\end{equation*}

To compute $\LDE(P)$, we view the vector $P$ as a concatenation of values of a multivariate polynomial evaluated over a subset of points. The parameters of this computation are defined as follows. $\mathbb{F}$ is the field used in the computations. The number of variables in the multivariate polynomial is $m$. $\mathbb{H} \subset \mathbb{F}$ is such that $\mathbb{H}^m$ is the subset of points over which the multivariate polynomial is evaluated. $|\mathbb{H}| - 1$ is the maximum degree of each variable in the multivariate polynomial. We find the unique polynomial $\hat{p} : \mathbb{F}^m \to \mathbb{F}$ using polynomial interpolation over the used points in the vector $P$ and setting the remaining unused points in $\mathbb{H}^m$ having $0$ evaluation. We define $\LDE(P)$ as the evaluation of $\hat{p}$ over all points in $\mathbb{F}^m$:
\begin{equation*}
\LDE(P) = \left( \hat{p}(x) \right)_{x \in \mathbb{F}^m}.
\end{equation*}

Using the settings of the above parameters as given in Lemma 25 of \cite{Sudan2001}, we find that the conditions of the STV list decoder are satisfied (Lemma \ref{lemma:2}). We also observe that we can compute $\LDE(P)$ in linear space if we are given access to $P$, implying that computing $\LDE(P)$ is in $\PSPACE$. Therefore, we can also get an interactive proof protocol for it.

Using the above observations, we get analogous results similar to Section \ref{section:7} that computing the function $\LDE(P)$ for an $\NP$-complete language $P$ is rare-case hard for randomized polynomial-time algorithms and also for polynomial-sized circuit families.

\subsection{Rare-Case Hardness of Computing the Permanent of Random Matrices}
\label{section:8.2}

For an $n \times n$ matrix $A = \left( \left( a_{i, j} \right)_{i \in [n]} \right)_{j \in [n]}$, we define the \textit{permanent} of $A$ as
\begin{equation*}
\PER(A) = \sum_{\sigma \in S_n} \prod_{i \in [n]} a_{i, \sigma(i)},
\end{equation*}
where $S_n$ is the group of permutations over $[n]$. \cite{Feige1996} prove that computing $\PER(A)$ for $n \times n$ matrices over $\mathbb{Z}_p$ is $\left( 13 n^3 / p \right)$-rare-case hard for polynomial-time randomized algorithms under the assumption that $\AM \neq \PH$. They also prove that computing $\PER(A)$ for $n \times n$, $0 / 1$ matrices is $\left( 1 / 2^{n^{1 - \epsilon}} \right)$-rare-case hard for some $\epsilon > 0$ for polynomial-time randomized algorithms under the assumption that $\AM \neq \P^{\#\P}$. Their interactive proof protocol is of constant rounds, unlike our $p$-round interactive proof. This allows $p$ to be exponential in $n$, giving superpolynomially rare-case hardness results. Due to this limitation, our techniques cannot give superpolynomially rare-case hardness results.

\section{Further Directions} 
\label{section:9}

We have managed to show that from large families of languages, one can construct variants of the problem that are as hard in terms of time taken to compute even on a small fraction of instances. Some potential future directions are listed below.

\subsection{Construction of Superpolynomially Rare-Case Hard Functions}
\label{section:8.1}
In this paper, we proved that under assumptions like $\NP \not\subset \PPOLY$, for every $\alpha > 0$, there is a function derived that is hard to compute even on a $1/n^\alpha$-fraction of instances. If one proves this theorem with slight change in quantifier order - ``If $\NP \not\subset \PPOLY$, then there is a function that is hard to compute even on a $1/n^\alpha$-fraction of instances for every $\alpha > 0$, and sufficiently large $n$ (depending on $\alpha$)'' with similar properties as the one we showed, in terms of proof protocols, this would be an important intermediate step in showing something like $\NP \not\subset \PPOLY$ or any other such conjecture implying the existence of one-way functions. Assuming they exist, inverting a one-way function is a special case of a problem that is hard to solve in the rare-case, but there is a fast verification protocol. For a one-way function $f$, this rareness is superpolynomial, and the protocol is simply to provide the inverse - the answer to the inversion problem. We showed this for fixed polynomial rareness and a polynomial-time protocol that requires a polynomial number of answers to verify. Can we get these closer to the one-way function inversion case? The algebraic techniques used here to interpolate might limit the potential for superpolynomial rare-case hardness. Are there techniques that can yield similar or even stronger worst-case to rare-case reductions?

It is known that the existence of one-way functions is equivalent to the existence of pseudorandom generators \citep{Hastad1999}. In the field of ``hardness versus randomness,'' it has also been shown that certain hardness assumptions imply the derandomization of $\BPP$, that is, results such as $\P = \BPP$ and $\BPP \subset \cap_{\epsilon > 0}\TIME(2^{n^{\epsilon}})$ \citep{Nisan1994, Impagliazzo1997}. Can such results be shown with assumptions analogous to $\NP \not\subset \PPOLY$?

\subsection{Derandomizing the Reduction}
\label{section:8.2}
Due to the fact that our reduction is randomized, we could only use conjectures such as $\NP \not\subset \BPP$, $\NP \not\subset \PPOLY$, RETH, and RSETH. Suppose this reduction is derandomized or new rare-case hard problems are constructed with fully deterministic worst-case to rare-case reductions. In that case, one can show similar reductions under the assumption that $\P \neq \NP$ or the standard versions of the exponential time hypotheses.

\subsection{Refuting the Strong Exponential Time Hypothesis}
\label{section:8.3}
Developments in the past decade have cast doubt on the validity of the \textit{Strong Exponential Time Hypothesis (SETH)} \citep{Vyas2019, Williams2014, Williams2024}. To refute even the stronger RSETH, can one find $2^{0.999n}$-time algorithms for the functions $f^{\prime\prime}_{L, \beta}$ we derived from $k\SAT$ for all $k$, that are correct on the required, yet small, fraction of instances? It seems more feasible to find algorithms in cases with algebraic symmetries and when one can afford to be correct on only a vanishing fraction of instances. If not under the derived functions in this paper, can one find functions with worst-case to rare-case reductions from $k\SAT$ that are easier to find algorithms for?

\subsection{Rare-Case Hardness for More Natural Problems}
\label{section:8.4}
We have shown rare-case hardness, which works well with arbitrarily and algebraically defined functions. As is the case with the $\DLP$ \citep{Blum1982} and computing the permanent \citep{Feige1996}, can one show rare-case hardness for more natural problems under some reasonable assumptions?

\bibliographystyle{apalike}

\bibliography{main}

\appendix
\section{An Alternative Proof of a Variant of Lipton's Theorem}
\label{appendix:A}

Our work so far can be extended to give an alternative proof of a variant of a theorem of \cite{Lipton1989}. We stress that Lipton's proof of this theorem was groundbreaking since error correction techniques were still in their primitive stages. \cite{Sudan2001} constructed their breakthrough list decoder many years after Lipton's result.

\begin{theorem}
\label{theorem:2}

If $\P^{\SHARPP} \not\subset \PPOLY$, then for any $\PSPACE$-complete language $L$, for infinitely many input lengths $l$, there is a polynomial-time samplable distribution $\mathcal{D}^\prime$ such that for any polynomial-sized circuit family $\{\, C^\prime_l \,\}_{l \in \mathbb{N}}$ (with $C^\prime_l : \{\, 0, 1 \,\}^l \to \{\, 0, 1 \,\}$),
\[
\Pr_{y^\prime \sim \mathcal{D}^\prime}[C^\prime_l(y^\prime) \neq L(y^\prime)] > \Omega \left( \frac{1}{\log l} \right).
\]

\end{theorem}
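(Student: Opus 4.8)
The plan is to mimic the structure of Theorem~\ref{theorem:1} and its corollaries, but starting from a $\PSPACE$-complete language instead of an $\NP$-complete one, and to replace the certificate-counting polynomial with a low-degree arithmetization of a $\PSPACE$ computation that admits a sumcheck-style protocol. First I would fix a $\PSPACE$-complete language $L$ and, using the standard $\TQBF$-style arithmetization (as in Shamir's $\IP=\PSPACE$ proof, with the degree-reduction operator to keep degrees polynomially bounded), build for each prime $p$ a polynomial $\hat{g}_{L,p}:\mathbb{Z}_p^{m}\to\mathbb{Z}_p$ of polynomially-bounded degree whose value at Boolean points encodes $L$, and which has a downward self-reduction analogous to Lemma~\ref{lemma:6}. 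Crucially, because $\P^{\SHARPP}\subseteq\PSPACE$, the number of ``accepting paths'' (or the relevant $\SHARPP$ quantity) is recoverable over $\mathbb{Z}$ by the Chinese remainder theorem (Lemma~\ref{lemma:3}) from its residues modulo sufficiently many polynomially-sized primes, exactly as in Section~\ref{section:6}; so any circuit family computing $\hat{g}_{L,p}$ on enough instances, for enough primes, would place $\P^{\SHARPP}$ in $\PPOLY$.

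Next I would run the worst-case to rare-case machinery essentially unchanged: given a circuit family that agrees with the target polynomial on a $1/\log l$ fraction of inputs, I want to amplify to high-probability correctness. Here the key difference from the main theorem is that $1/\log l$ is \emph{smaller} than any $1/n^\alpha$, so I cannot directly invoke the STV list-decoder (Lemma~\ref{lemma:2}), whose agreement threshold $\epsilon>\sqrt{2d/|\mathbb{F}|}$ forces $\epsilon$ polynomially large when $d,|\mathbb{F}|$ are polynomial. This is where the quantifier structure of the statement helps: the conclusion only asks for \emph{infinitely many input lengths} $l$ and a $1/\log l$ bound (not $1/l^\alpha$), so the standard route is the contrapositive combined with the fact that a circuit \emph{family} of fixed polynomial size that is correct on a $1/\log l$ fraction for all large $l$ is ``too good'' --- one re-runs the self-correction/sumcheck argument with the list-decoder applied not to $\mathbb{Z}_p^m$ directly but to a restriction where the effective field size can be taken large relative to the degree (e.g., by taking $p$ to grow faster, or by curve-restriction tricks as in Lipton's original argument over lines), so that $\sqrt{2d/p}$ drops below $1/\log l$. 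The OSP-style verification of Lemma~\ref{lemma:7} then identifies the correct list element with exponentially small error, and the distribution $\mathcal{D}'$ is simply the uniform distribution on the instance set $\mathcal{D}$ of $\hat{g}_{L,p}$ (which is polynomial-time samplable), transported to $\{0,1\}^l$ by the natural encoding; the $\Omega(1/\log l)$ bound in the conclusion is exactly the rare-case correctness threshold that the amplification can tolerate.

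The circuit extension is handled as in Corollary~\ref{corollary:2}, via the Adleman-style hard-coding of a good random string, so the randomized reduction becomes a nonuniform deterministic one and the hypothesis $\P^{\SHARPP}\not\subset\PPOLY$ (rather than a $\BPP$-flavored assumption) suffices. Assembling: if no such $\mathcal{D}'$ existed, then for every polynomial-time samplable distribution and all large $l$ some polynomial-size circuit would be $(1-\Omega(1/\log l))$-correct --- in particular the uniform distribution on $\mathcal{D}$ --- and feeding that circuit through the list-decoder, the OSP verifier, and Chinese remaindering yields a polynomial-size circuit family computing the $\SHARPP$ quantity underlying $L$ on the worst case, i.e.\ $\P^{\SHARPP}\subset\PPOLY$, a contradiction.

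I expect the main obstacle to be precisely the gap between the STV list-decoder's $\epsilon>\sqrt{2d/|\mathbb{F}|}$ requirement and the target agreement $1/\log l$, which is sub-polynomial. The honest fix is that one does not need to list-decode at agreement $1/\log l$: one only needs that a circuit family of \emph{fixed} polynomial size $l^k$ cannot maintain $1/\log l$ agreement for all large $l$ with a polynomial-degree polynomial over a field that is allowed to be as large as, say, $l^{3k}$ (chosen after $k$ via the ``infinitely many $l$'' clause), at which point $\sqrt{2d/|\mathbb{F}|}\ll 1/\log l$ and Lemma~\ref{lemma:2} applies cleanly; making this choice of parameters consistent with the arithmetization's degree bound and with the Chinese-remainder requirement $\prod_i p_i > 2^{\poly}$ is the delicate bookkeeping step, entirely analogous to the choice $\beta=3+10^6(\alpha+c)$ in Equation~\eqref{eq:9} but with the roles of ``rareness'' and ``field size'' retuned.
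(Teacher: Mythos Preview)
Your approach is genuinely different from the paper's, and it carries a real gap.

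The paper does \emph{not} arithmetize the $\PSPACE$-complete language $L$ at all. Instead it reuses the already-established $1/n^\alpha$ rare-case hardness of $f''_{\SAT,\beta}$ from Corollary~\ref{corollary:4}, views $f''_{\SAT,\beta}$ as a function with $2k=O(\log n)$ output bits, and observes that each bit-language $f''_{\SAT,\beta,i}$ lies in $\IP=\PSPACE$ and therefore many-one reduces to $L$ in polynomial time, sending an instance $x$ to an $L$-instance $y_i$. If a single circuit $C'_l$ decided $L$ with error $\epsilon_i$ on the induced distribution $D_i$ of $y_i$, then $2k$ copies of $C'_l$ would compute all bits of $f''_{\SAT,\beta}(x)$; linearity of expectation plus the $1/n^\alpha$ hardness forces $\sum_i \epsilon_i \ge 1-1/n^\alpha$, and taking $\mathcal{D}'$ to be the uniform mixture of the $D_i$ gives error at least $(1-1/n^\alpha)/(2k)=\Omega(1/\log l)$. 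No list-decoding at agreement $1/\log l$ ever occurs; the logarithm arises purely from averaging over the $O(\log n)$ output bits.

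Your plan, by contrast, arithmetizes $L$ itself and tries to run the STV machinery at threshold $1/\log l$. The obstacle you identify is real for your route, but more seriously, there is a type mismatch you do not address: the theorem demands hardness of the Boolean function $L:\{0,1\}^l\to\{0,1\}$ under some distribution on $\{0,1\}^l$, whereas your construction yields hardness of the $\mathbb{Z}_p$-valued polynomial $\hat g_{L,p}$ on $\mathbb{Z}_p^m$. Encoding $\mathbb{Z}_p^m$ into $\{0,1\}^l$ does not make $L$ agree with $\hat g_{L,p}$ on those encodings, so a circuit that is good for $L$ on your distribution tells you nothing about $\hat g_{L,p}$. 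Closing that gap requires reducing evaluation of (the bits of) $\hat g_{L,p}$ back to $L$-instances via $\PSPACE$-completeness --- which is exactly the paper's bit-decomposition trick, and once you do it the averaging argument gives $\Omega(1/\log l)$ for free, rendering the sub-polynomial list-decoding discussion unnecessary. Your CRT step is also orphaned: after a $\TQBF$-style arithmetization of a decision language there is no integer ``number of accepting paths'' to reconstruct, so the Chinese-remaindering plays no role in this direction.
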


\begin{proof}
Using Corollary \ref{corollary:4}, we know that if $\P^{\SHARPP} \not\subset\PPOLY$, then the function $f^{\prime \prime}_{\SAT, \beta}$ derived from $\SAT$ (as defined in Section \ref{section:6}) for the parameter $\alpha$ cannot be computed by polynomial-sized circuit families, even on a $1/n^\alpha$-fraction of instances for sufficiently large $n$ (depending on the size exponent and $\alpha$).

Now, the function
\begin{equation*}
f^{\prime \prime}_{\SAT, \beta} : \cup_{p \in \left( n^\beta, n^\beta + \frac{n^\beta}{n + 2 n^c} \right)_p} \mathbb{Z}_p^n \times \mathbb{Z}_p^{n^c} \times \mathbb{Z}_p^{n^c} \times \{\, p \,\} \to \cup_{p \in \left( n^\beta, n^\beta + \frac{n^\beta}{n + 2 n^c} \right)_p} \mathbb{Z}_p \times \{\, p \,\},
\end{equation*}
can be seen as a function
\begin{equation*}
f^{\prime \prime}_{\SAT, \beta} : \{\, 0, 1 \,\}^{\left( 2 n^c + n + 1 \right) k} \to \{\, 0, 1 \,\}^{2k},
\end{equation*}
where
\begin{equation*}
k = \left \lceil \log \left( n^\beta + \frac{n^\beta}{n + 2 n^c} \right) \right \rceil.
\end{equation*}

Let $L$ be any $\PSPACE$-complete language. Representing $f^{\prime \prime}_{\SAT, \beta}$ as a function from binary strings to binary strings, $f^{\prime \prime}_{\SAT, \beta, i}(x)$ is the $i$'th bit of $f^{\prime \prime}_{\SAT, \beta}(x)$. Since there is an interactive proof for $f^{\prime \prime}_{\SAT, \beta, i}$ (the same interactive proof for $f^{\prime \prime}_{\SAT, \beta}$), the binary language defined by $f^{\prime \prime}_{\SAT, \beta, i}(x)$ (where $x$ is in the language if and only if $f^{\prime \prime}_{\SAT, \beta, i}(x) = 1$) is in $\PSPACE$. Due to this, there is a $\poly(n, k)$-time reduction from $f^{\prime \prime}_{\SAT, \beta, i}$ to $L$. The $L$ instance obtained from $x$ for $f^{\prime \prime}_{\SAT, \beta, i}$ is $y_i$.

We define the function
\begin{equation*}
L_{2k}: \{\, 0, 1 \,\}^{\left( 2 n^c + n + 1 \right) k} \to \{\, 0, 1 \,\}^{2k}
\end{equation*}
as
\begin{equation*}
L(x) = \left( L(y_i) \right)_{i \in [2k]},
\end{equation*}
where each $y_i$ is a string as defined above and $L(y_i)$ acts as an indicator function for $y_i$'s membership in $L$. Doing this for each $i$, we get $L_{2k}(y_i)_{i \in [2k]} = f^{\prime \prime}_{\SAT, \beta}(x)$, the output of $f^{\prime \prime}_{\SAT, \beta, i}$ and the $i$'th bit of $L_{2k}$ being the same. A circuit computing $L_{2k}$ takes $2k$ strings $(y_i)_{i \in [2k]}$ of the same size and returns $\left( L(y_i) \right)_{i \in [2k]}$. 

Let $D$ be the distribution of $y = (y_i)_{i \in [2k]}$ obtained from sampling from the valid instances of $f^{\prime \prime}_{\SAT, \beta}$ uniformly and applying the reduction to $L$ (or $L_{2k}$). Let $D_i$ be the distribution of $y_i$ obtained from sampling $y$ from $D$ and taking only $y_i$. We know from our result in Corollary \ref{corollary:4}, that for any polynomial-sized circuit family $\{\, C_l \,\}_{l \in \mathbb{N}}$,
\begin{equation*}
\Pr_{y \sim D}[C_{2kl}(y) = L_{2k}(y)] < \frac{1}{n^\alpha},
\end{equation*}
where $l = \poly(n, k)$ is the length of each $y_i$ in $y$.

Suppose that we have a polynomial-sized circuit family $\{\, C_l^\prime \,\}_{l \in \mathbb{N}}$. For a given input length $l$, for all $i \in [2k]$, we have
\begin{equation} \label{eq:18}
\Pr_{y_i \sim D_i}[C^\prime_l(y_i) = L(y_i)] = 1 - \epsilon_i.
\end{equation}
There is also a naive polynomial-space algorithm that evaluates all certificates in the sum of $f^{\prime\prime}_{\SAT, \beta}$.

Now, consider the following argument. Given $2k$ copies of $C^\prime_l$, we have a circuit $C_{2kl}$ trying to compute $L_{2k}$. Suppose $X$ is a random variable that, given $y = (y_i)_{i \in [2k]}$, returns the number of bits where $L(y_i) = C^\prime_l(y_i)$ (given $C_{2kl}$ is fed $y$ as input). From the rare-case hardness of $f^{\prime\prime}_{\SAT, \beta}$ (Corollary \ref{corollary:4}), we have
\begin{equation}
\label{eq:19}
\mathbb{E}_{y \sim D}[X] \leq \left(1 - \frac{1}{n^\alpha}\right)(2k - 1) + \frac{2k}{n^\alpha} = 2k - 1 + \frac{1}{n^\alpha}.
\end{equation}

By the linearity of expectation and Equation \eqref{eq:18}, we have
\begin{equation}
\label{eq:20}
\mathbb{E}_{y \sim D}[X] = \sum_{i \in [2k]} \mathbb{E}_{y_i \sim D_i}[X_i(y_i)] = \sum_{i \in [2k]} \Pr_{y_i \sim D_i}[C^\prime_l(y_i) = L(y_i)] = 2k - \sum_{i \in [k]} \epsilon_i
\end{equation}
From Equations \eqref{eq:19} and \eqref{eq:20}, we get
\begin{equation}
\label{eq:21}
\begin{split}
& 2k - 1 + \frac{1}{n^\alpha} \geq 2k - \sum_{i \in [2k]} \epsilon_i \\
& \iff \sum_{i \in [k]} \epsilon_i \geq 1 - \frac{1}{n^\alpha}. 
\end{split}
\end{equation}

We now define the distribution $D^\prime$ as follows: Sample $i$ uniformly from $[2k]$ and then sample from $D_i$. Then, we have
\begin{equation}
\label{eq:22}
\Pr_{y^\prime \sim D^\prime}[C^\prime_l(y^\prime) = L(y^\prime)] = 1 - \frac{\sum_{i \in [2k]} \epsilon_i}{2k}.
\end{equation}
Thus, from Equations \eqref{eq:21} and \eqref{eq:22}, we get
\begin{equation*}
\Pr_{y^\prime \sim D^\prime}[C^\prime_l(y^\prime) \neq L(y^\prime)] = \frac{\sum_{i \in [2k]} \epsilon_i}{2k} \geq \frac{1 - \displaystyle \frac{1}{n^\alpha}}{2k} = \Omega \left( \frac{1}{\log n} \right) = \Omega \left( \frac{1}{\log l} \right).
\end{equation*}
\end{proof}

A similar result holds for $\P^{\SHARPP} \not\subset \BPP$ and polynomial time algorithms. Typically, Yao's XOR Lemma \citep{Goldreich2011} is used to construct harder functions in $\PSPACE$ after obtaining a preliminary hardness amplification.
\end{document}